\documentclass[acmsmall,screen,nonacm]{acmart}

\usepackage{tikz}
\usetikzlibrary{arrows.meta,fit,calc,backgrounds}
\usepackage{amsmath}
\usepackage{comment}
\usepackage{xspace}
\usepackage{enumitem} %
\usepackage{csquotes} %
\usepackage{booktabs}
\usepackage{makecell}
\usepackage[font=small,skip=5pt]{caption} %
\usepackage[font=small,skip=5pt]{subcaption}
\usepackage{algorithm}
\usepackage{algpseudocode}
\usepackage{tabularx}
\usepackage{multirow}
\usepackage{soul}
\usepackage{hyperref}
\usepackage{wrapfig}
\usepackage{xcolor}
\usepackage[breakable, skins]{tcolorbox}
\newcommand{\code}[1]{%
  \tikz[baseline=(X.base)]\node[fill=gray!12,rounded corners=3pt,inner xsep=2pt,inner ysep=2pt,text height=0.6em,text depth=0.1em,anchor=base] (X) {\texttt{\small{}#1}};%
}

\tcbset{arc=0mm,size=fbox,attach title to upper={\ ---\ },coltitle=black,fonttitle=\bfseries}

\DeclareCaptionLabelFormat{custom}{#2)}
\newcommand{\tool}{\texttt{SPARC}\xspace}
\newcommand{\name}{\textsc{SPARC-Shield}\xspace}
\definecolor{cutcolor}{HTML}{E07000}

\definecolor{phaseone}{HTML}{083b8c}
\definecolor{phasetwo}{HTML}{008a0e}
\definecolor{phasethree}{HTML}{e81313}
\definecolor{phasefour}{HTML}{ba23f6}
\definecolor{phasefive}{HTML}{82755b}

\newlength{\circlebaseline}
\DeclareRobustCommand*\circleone{\tikz[baseline=\circlebaseline]{%
    \node[shape=circle,draw=phaseone,fill=phaseone,text=white,inner sep=1pt,font=\scriptsize\bfseries] (char) {1};}\xspace}
\DeclareRobustCommand*\circletwo{\tikz[baseline=\circlebaseline]{%
    \node[shape=circle,draw=phasetwo,fill=phasetwo,text=white,inner sep=1pt,font=\scriptsize\bfseries] (char) {2};}\xspace}
\DeclareRobustCommand*\circlethree{\tikz[baseline=\circlebaseline]{%
    \node[shape=circle,draw=phasethree,fill=phasethree,text=white,inner sep=1pt,font=\scriptsize\bfseries] (char) {3};}\xspace}
\DeclareRobustCommand*\circlefour{\tikz[baseline=\circlebaseline]{%
    \node[shape=circle,draw=phasefour,fill=phasefour,text=white,inner sep=1pt,font=\scriptsize\bfseries] (char) {4};}\xspace}
\DeclareRobustCommand*\circlefive{\tikz[baseline=\circlebaseline]{%
    \node[shape=circle,draw=phasefive,fill=phasefive,text=white,inner sep=1pt,font=\scriptsize\bfseries] (char) {5};}\xspace}

\definecolor{stepA}{HTML}{618B55}
\definecolor{stepB}{HTML}{BF5952}
\definecolor{stepC}{HTML}{8861A2}
\definecolor{stepD}{HTML}{51668A}
\definecolor{stepE}{HTML}{7A7A00}
\DeclareRobustCommand*\stepbox[2]{\tikz[baseline=\circlebaseline]{%
    \node[shape=rectangle,rounded corners=0.8pt,draw=#1,fill=#1,text=white,inner sep=0pt,minimum size=1.15em,font=\scriptsize\bfseries] (char) {#2};}}
\DeclareRobustCommand*\stepA{\stepbox{stepA}{A}\xspace}
\DeclareRobustCommand*\stepB{\stepbox{stepB}{B}\xspace}
\DeclareRobustCommand*\stepC{\stepbox{stepC}{C}\xspace}
\DeclareRobustCommand*\stepD{\stepbox{stepD}{D}\xspace}
\DeclareRobustCommand*\stepE{\stepbox{stepE}{E}\xspace}
\definecolor{rqframe}{HTML}{083b8c}
\DeclareRobustCommand*\rqbox[1]{\smash{\tikz[baseline=(char.base)]{%
    \node[shape=rectangle,rounded corners=0.8pt,draw=rqframe,line width=0.5pt,inner xsep=2pt,inner ysep=1.5pt,text height=1.6ex,text depth=0.3ex,text=rqframe,font=\scriptsize\bfseries] (char) {RQ#1};}}}

\setcopyright{none}
\renewcommand\footnotetextcopyrightpermission[1]{} 

\begin{document}
\title{\tool: Substation-to-Peering Assessment of Resilience and Cascades}

\author{Tanya Shreedhar}
\email{t.shreedhar@tudelft.nl}
\affiliation{%
  \institution{TU Delft}
  \country{The Netherlands}
}

\author{Kevin Vermeulen}
\email{kevinmylinh.vermeulen@gmail.com}
\affiliation{%
  \institution{LIX, CNRS, Ecole Polytechnique}
  \country{France}
}

\author{Amreesh Phokeer}
\email{phokeer@isoc.org}
\affiliation{%
  \institution{Internet Society}
  \country{USA}
}

\author{Italo Cunha}
\email{cunha@dcc.ufmg.br}
\affiliation{%
  \institution{Universidade Federal de Minas Gerais}
  \country{Brazil}
}

\author{Fernando Kuipers}
\email{f.a.kuipers@tudelft.nl}
\affiliation{%
  \institution{Delft University of Technology (TU Delft)}
  \country{The Netherlands}
}

\renewcommand{\shortauthors}{Shreedhar et al.}

\begin{abstract}
Internet Exchange Points (IXPs), where networks exchange a large share of Internet traffic, operate from colocation facilities that draw power from the regional grid and store a limited supply of backup fuel.
In April 2025, the Iberian blackout kept the Spanish grid down for 16~hours, longer than the 12~hours of fuel that the industry standard requires, and traffic at major Iberian IXPs fell by up to 70\%.
Yet operators cannot tell which facilities such an event cuts off, whether the outage outlasts their backup, or where the traffic of a failed IXP goes.
Prior work leaves these questions open, as it maps the grid dependency of the Internet only at the scale of whole synchronous grids or detects outages after they happen.
We present \tool, to our knowledge the first study that maps the interdependence of the power grid and Internet infrastructure from the substation to the IXP and to its traffic.
\tool extracts the high-voltage substations from open data, checks them against the registries of transmission operators, and links each facility to the substations that likely feed it, creating a bird's-eye view of the substations on which 552 facilities and 275 IXPs depend.
To understand which IXPs a grid event takes offline and where their traffic goes, we spread grid events over this map and measure failover with controlled BGP withdrawals on the PEERING testbed.
We find that the dependency of European IXPs is concentrated, as disabling the most-shared 16\% of the substations that feed facilities cuts power to half of all IXP presences (an IXP at one facility), under half of what randomized maps require.
A grid event also reaches far beyond the substations it strikes, since losing the 5\% most central substations cuts power to 0.1\% of IXP presences directly but to 9--52\% once the grid splits into disconnected islands.
Because traffic fails over mostly to nearby sites, which often share the failing grid, a cascade that takes down four metropolitan areas at once sends up to 40\% of the monitored addresses of one area to another of the four.
Building on these findings, we design \name, which strengthens IXPs against grid events by selecting the substations to harden or the facilities whose backup to extend.
Against an attacker who disables ten substations, hardening the ten that \name selects cuts the IXP presence offline by 72\%, 1.4$\times$ the 52\% of the best standard ranking.
\end{abstract}
\maketitle

\section{Introduction}
\label{section:introduction}

On 28 April 2025, the Iberian Peninsula lost its entire power supply~\cite{entsoe2026iberian}.
An overvoltage in southern Spain tripped generation, and the loss of generation cascaded through the grid until the peninsula separated from the rest of the European grid~\cite{entsoe2026iberian}.
Data centers in Spain and Portugal switched to backup power and stayed in operation~\cite{swinhoe2025iberian}, yet traffic at four major Iberian Internet Exchange Points (IXPs) still dropped by 33--70\%, and the disruption reached networks in Morocco and Angola through transit dependencies~\cite{hohlfeld2026lights}.
The grid returned after 12~hours in Portugal and 16~hours in Spain~\cite{entsoe2026iberian}, longer than the 12~hours of fuel that the industry standard requires a data center to store~\cite{heslin2014fuel}.
Without refueling, a Spanish facility with only the required fuel would therefore have gone offline 4~hours before the grid returned.

This chain from the grid to the Internet starts with a fault at a few substations, which can spread through the transmission grid until whole regions lose power.
Each colocation facility\footnote{A \emph{facility} is a building that hosts networking equipment and draws power from the grid. An IXP is a layer-two switching fabric on which networks exchange traffic. It may span several facilities, and a facility may host several IXPs. An \emph{IXP presence} is one IXP at one facility.} in such a region draws its power from nearby high-voltage substations, its \emph{feeders}, and falls back on batteries and generators when they fail~\cite{avelar2016battery}.
Because an IXP operates from one or several facilities, and one facility often hosts several IXPs~\cite{ager2012anatomy}, the loss of one building can affect several exchanges at once.
As the facilities of an IXP run out of backup power, its members are disconnected, and BGP moves their traffic to other \emph{sites} where the same networks interconnect.
That traffic survives only if those sites do not depend on the same failing grid.

The interdependence between the power grid and Internet infrastructure remains unmeasured, because the data and the communities that study the two domains sit apart.
Transmission operators publish their grids in different forms or not at all, the substation that feeds a facility is a private arrangement between the facility and its grid operator, and Internet facilities appear only in operator-maintained registries such as PeeringDB~\cite{peeringdb}.
Power-systems research models cascades inside the grid~\cite{buldyrev_2010,motter2002cascade}, while Internet measurement detects IXP outages at the network layer after they happen, without the grid event behind them~\cite{giotsas2017detecting,milolidakis2019detecting}.
The closest work maps power and Internet infrastructure at the scale of whole synchronous grids~\cite{jyothi2023characterizing}, a scale at which an event inside one synchronous area, such as the Iberian blackout, is not visible.

\emph{The open question is not whether the Internet depends on the grid, but which IXPs a grid event takes down, after how long, and where their traffic goes.}
We answer it with \tool, to our knowledge the first study that maps the interdependence of the power grid and Internet infrastructure from the substation to the IXP and to its traffic.
We first map Europe's high-voltage substations from OpenStreetMap~\cite{openstreetmap}, link each IXP facility to its nearest substations, and check both against the registries of two transmission operators in the region~\cite{tennet,rte}.
To observe where traffic goes, we then emulate the loss of IXP sites in the real Internet with BGP withdrawals on the PEERING testbed~\cite{schlinker19peering}.
Finally, we spread grid events over the measured map and compare their duration with facility backup.
Building on these findings, we design \name, which selects the substations or facilities to harden so that grid events take the least IXP presence offline.

We organize the study around four questions.
\rqbox{1} \emph{Which substations power Europe's IXPs, and how concentrated is this dependence?}
\rqbox{2} \emph{When the IXP facilities of a metropolitan area lose power, where does their traffic go, and does it reach sites on the same grid?}
\rqbox{3} \emph{How much IXP presence does a grid event cut off as it spreads, and when does the outage outlast facility backup?}
\rqbox{4} \emph{Which substations or facilities should an operator harden to prevent the most IXP outages, and how much does that choice depend on the threat?}
Our contributions are fourfold.
\begin{itemize}[leftmargin=0pt,itemindent=0pt,labelwidth=0pt,labelsep=0pt,topsep=.5em,partopsep=0pt,parsep=0pt,itemsep=0pt]
\item[] \rqbox{1}\ \textbf{A substation-level map of the grid-Internet dependency in Europe.}
We extract 20{,}085 high-voltage substations in the scope\footnote{Albania, Austria, Belgium, Bosnia and Herzegovina, Bulgaria, Croatia, Czechia, Denmark, Estonia, Finland, France, Germany, Greece, Hungary, Italy, Latvia, Liechtenstein, Lithuania, Luxembourg, Moldova, Montenegro, the Netherlands, North Macedonia, Norway, Poland, Portugal, Romania, Serbia, Slovakia, Slovenia, Spain, Sweden, Switzerland, and Ukraine, i.e., the Continental European and Nordic synchronous areas. Montenegro has no PeeringDB facility and no facility-level result.} from OpenStreetMap, connect them along the transmission lines, and link each of 552 PeeringDB facilities, which host 275 IXPs, to its three nearest substations as candidate feeders.
To our knowledge, this is the first map of this dependency at substation granularity across a continent.
Against the ground-truth registries of the Dutch and French transmission operators, the map recovers 98.1\% and 96.7\% of the substations at a precision of 86.2\% and 98.5\%, and the candidate feeders contain the feeder that Dutch network operators publish~\cite{capaciteitskaart} for 58 of 61 facilities (\S\ref{section:methodology}).
We measure how concentrated the dependency is by the number of substations that must fail before half of European IXP presence loses power.
When the most-shared substations fail first, 16.2\% of those that feed a facility suffice, whereas maps with the IXP presences shuffled or the candidates rewired need 2.1--2.4$\times$ as many (\S\ref{sec:results:bipartite}).
\item[] \rqbox{2}\ \textbf{Controlled measurements of IXP-site failover.}
We withdraw prefixes from PEERING sites one at a time and in cascades, and observe for 404{,}957 addresses in 61{,}061 ASes where their traffic is rerouted.
Most traffic moves to the nearest sites, which in Europe often share the grid of the failed site.
Some traffic, however, moves thousands of kilometers, because regional transit providers deliver it at distant sites (\S\ref{section:routing_results}).
\item[] \rqbox{3}\ \textbf{Analysis of how far grid events reach.}
We spread grid events over the measured map.
Because whether an island keeps power after a split depends on its balance of generation and load at that moment, we bound island survival with two rules, one matching the 2006 split and one the 2003 and 2025 splits.
Propagation causes most of the outage, as losing the 5\% most central substations cuts power to 0.1\% of IXP presence directly but to 8.8--51.7\% once the grid splits, between the two rules.
With 12~hours of fuel and no refueling, only events longer than 12~hours take IXP presence offline, and the 2003 Italian and 2025 Iberian blackouts both lasted longer.
When a grid cascade takes down four metropolitan areas, 7--40\% of the monitored addresses of each fail over on PEERING to another of the four before reaching a working site (\S\ref{section:grid_results}, \S\ref{sec:shield:routing}).
\item[] \rqbox{4}\ \textbf{\name, hardening built on the measured dependency.}
An operator can harden a limited number of substations, for example with pre-positioned mobile generators~\cite{lei2018mobile}, or extend the backup of a limited number of facilities.
\name chooses them against grid cascades and against an adversary who disables a few substations, as in documented physical attacks (Table~\ref{tab:bg:taxonomy}, Appendix~\ref{app:taxonomy}), following the defender-attacker formulation of infrastructure protection~\cite{brown2006defending}.
Against an adversary who disables ten substations, hardening the ten that \name selects lowers the IXP presence offline from 167 to 47, whereas the best standard ranking leaves 80, 1.7$\times$ as much.
The best set also changes with the threat that the operator plans for (\S\ref{section:shield}).
\end{itemize}

To support reproducibility, we will release our dataset and analysis code upon acceptance.

\section{Background and Related Work}
\label{section:background}

\subsection{Background}

\label{sec:bg:grid}
A transmission system operator (TSO) runs the transmission grid of a country or region, and distribution operators deliver power from it to customers~\cite{eu2019directive944}.
Extra-high-voltage (EHV) lines at $\geq$220~kV carry power over long distances, while high-voltage (HV) lines at 110--150~kV distribute it within a region~\cite{KLAUZER2024104042}.
We call the grid at $\geq$110~kV the \emph{transmission grid}.
The lines terminate at substations, where switchgear connects them and transformers may join one voltage level to the next~\cite{iec60050_605}.
OpenStreetMap (OSM) maps power infrastructure at every voltage, from 380~kV switching yards to distribution kiosks~\cite{osmwiki_substation}.
However, the operator registries against which its coverage can be checked list only transmission substations (\S\ref{section:validation}), and the distribution line that connects a building is a private record of its grid operator.
We therefore link each facility to the transmission substations that likely feed it, the finest grid element we can validate across the scope.

\label{sec:bg:taxonomy}
Because all generators in a synchronous area run at one common frequency~\cite{eu2016regulation631}, a loss of generation or load anywhere in the area shifts the frequency everywhere in it.
The scope spans the Continental European Synchronous Area (CESA) and the Nordic area, which are joined only by direct-current links.
Within an area, a local fault can split the grid into \emph{islands}, parts that lose their connection to the rest of the grid and must balance their own generation and load.
On 4 November 2006, the planned switch-off of a 380~kV double-circuit line in northern Germany overloaded neighboring lines, whose cascading trips half an hour later split the continental grid into three islands within 15~s~\cite{ucte2007split}.
Italy in September 2003~\cite{ucte2004italy} and the Iberian Peninsula in April 2025~\cite{entsoe2026iberian} each separated from the rest of the continental grid, after which the grid of each region collapsed.
Official investigations document eight grid events, from lightning strikes to physical attacks on substations, that lasted from 15~min to several days (Table~\ref{tab:bg:taxonomy}, Appendix~\ref{app:taxonomy}).

\label{sec:bg:backup}
\label{sec:bg:terminology}
A facility rides through a grid failure in two stages, as batteries carry the load until generators start and the generators then run on fuel stored on site~\cite{avelar2016battery} (Table~\ref{tab:bg:backup}, Appendix~\ref{app:backup}).
The Uptime Institute Tier Standard requires fuel for 12~hours at design load at every Tier~\cite{heslin2014fuel,uptime_tier_topology}.
Four of the eight grid events, in Italy, Texas, Ukraine, and Iberia, left part of the affected customers without power for longer than these 12~hours.
Because 255 of the 552 PeeringDB facilities in the scope host more than one IXP, and together hold 79\% of all IXP presences (\S\ref{sec:method:join}), two IXPs that look independent in routing can share a building, its power supply, and its backup.

\subsection{Related Work}
\label{section:related}

\smallskip\noindent\textbf{Mapping Internet and grid infrastructure.}
Prior work maps the long-haul fiber of the US Internet from public records~\cite{durairajan2015intertubes}, assesses the exposure of Internet infrastructure to sea-level rise~\cite{durairajan2018lights}, earthquakes~\cite{mayer2021resilience,9829785}, weather~\cite{padmanabhan2019residential}, and solar superstorms~\cite{jyothi2021solar}, and asks whether LEO satellites can make it more resilient to such hazards~\cite{stevens2024can}.
The closest work~\cite{jyothi2023characterizing} maps power and Internet infrastructure at the granularity of wide-area synchronous grids, finds nearly 65\% of public Internet infrastructure components in fewer than ten such grids, and builds NetWattZap, a tool that suggests deployments that avoid shared grids.
\tool instead maps the dependency at substation granularity, which resolves events inside one synchronous area, such as the Iberian blackout.
Among open grid maps, PyPSA-Eur, an optimization model of the European transmission system, took its grid from the schematic ENTSO-E map~\cite{Horsch_2018} until 2024, when it switched to an extraction from OSM that covers AC lines from 220 to 750~kV~\cite{xiong2025pypsaeur}.
Extractions that reach 110~kV, such as osmTGmod for Germany~\cite{osmTGmod} and the model of Klauzer et al.\ for Austria~\cite{KLAUZER2024104042}, each cover a single country, and GridKit provides general tooling to extract a grid topology from OSM~\cite{gridkit2017}.
These maps serve load-flow, energy-system, and security studies of the grid itself and leave out the Internet facilities that it supplies.
In contrast, \tool extracts every substation at $\geq$110~kV across the scope, which matters because the nearest substation of 74.1\% of facilities operates at 110--220~kV (\S\ref{sec:method:glance}), the tier that a 220~kV floor omits.
It applies the same thresholds in every country and validates its substations and candidate feeders against operator records (\S\ref{section:validation}).

\smallskip\noindent\textbf{Measuring outages and rerouting.}
Prior work detects outages at IXPs and colocation facilities after the fact from BGP data~\cite{giotsas2017detecting,milolidakis2019detecting} and infers which ASes interconnect within a metropolitan area~\cite{salamatian2024metascritic}.
Measurement studies also show that an IXP outage rarely stays local, as only 3\% of the IXPs with dependent networks have all their members and dependents in their own country~\cite{tashiro2024following}.
Routing experiments compare how the BGP announcement strategies of CDNs and clouds trade availability against control when a site fails~\cite{zhu2022best,zhu2025new}.
These works, however, study the loss of a site without the grid event behind it.
With PEERING, we fail the sites that a grid event would take down and test whether their traffic moves to sites that the event also takes down (\S\ref{section:routing_results}).

\smallskip\noindent\textbf{Cascade models.}
Following the interdependency taxonomy of Rinaldi et al.~\cite{rinaldi_2001}, coupled-network percolation models remove the nodes of one network that a failure in another network leaves without support~\cite{buldyrev_2010,Parshani_2010,gao_2012}.
Similarly, interacting dynamical models couple a power grid with a communication network~\cite{rosato_2008}.
Topological and flow-based studies remove grid elements under targeted orderings~\cite{Albert_2004,motter2002cascade,kinney2005modeling,kocc2013entropy}, including in the European grid~\cite{rosas2007topological}.
\tool runs such a cascade on the measured map at continental scale, bounds island survival with two rules matched to the documented grid splits, and compares the duration of the event with facility backup (\S\ref{section:grid_results}).

\smallskip\noindent\textbf{Infrastructure hardening.}
Brown et al.~\cite{brown2006defending} formulate the protection of critical infrastructure as a defender-attacker game.
Later work adds a defender response~\cite{alderson2011dad}, fortifies facility networks~\cite{church2007fortification}, and detects the critical nodes of a network~\cite{arulselvan2009detecting}.
Applied to the power grid, these formulations protect transmission assets against interdiction and measure the loss in unserved load~\cite{salmeron2009worst,yao2007trilevel}.
In contrast, \name protects the Internet infrastructure that depends on the grid, measures the loss in IXP presence, and can harden substations or extend facility backup (\S\ref{section:shield}).

\section{Mapping the Grid-Internet Dependency}\label{section:methodology}

\begin{figure}[t]
\centering
\includegraphics[width=\linewidth]{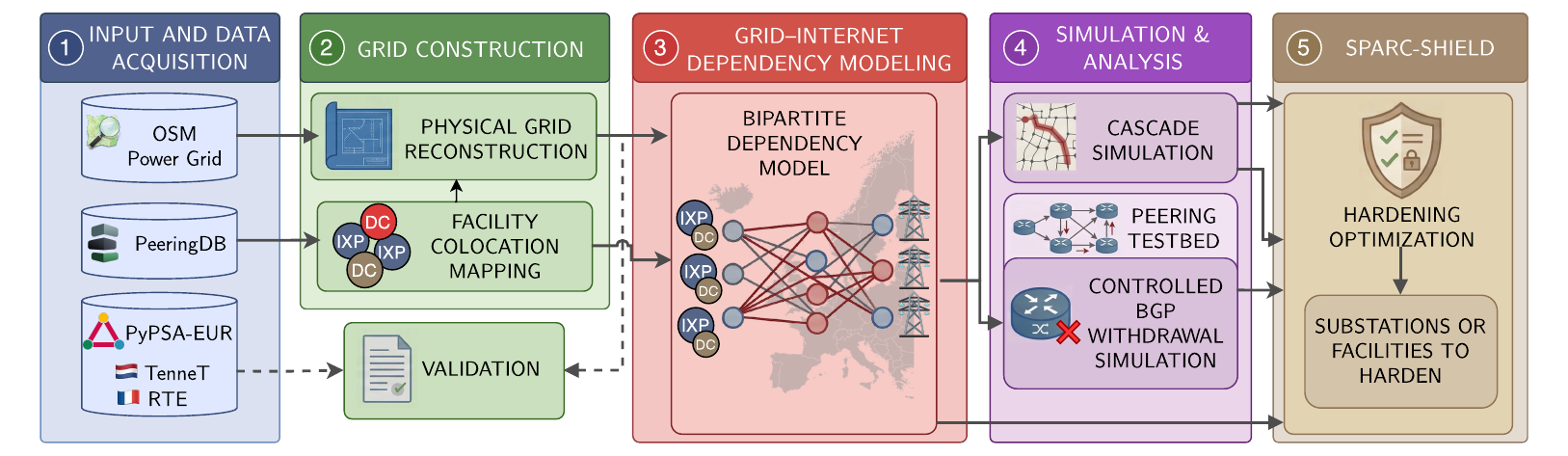}
\caption{The \tool pipeline. \circleone~OpenStreetMap and PeeringDB provide the transmission grid and the Internet facilities (\S\ref{sec:method:data}). \circletwo~We reconstruct the high-voltage substations and lines, map each facility to its location (\S\ref{sec:method:grid}), and validate the substations against PyPSA-Eur and the registries of TenneT and RTE (\S\ref{section:validation}). \circlethree~A bipartite dependency graph links each facility to its candidate feeder substations (\S\ref{sec:method:join}), which we validate against the feeders published in the Netherlands (\S\ref{section:validation}). \circlefour~Controlled BGP withdrawals on the PEERING testbed measure where traffic goes when a site fails (\S\ref{section:routing_results}), and a cascade simulation on the transmission graph determines which substations a grid event de-energizes (\S\ref{section:grid_results}). \circlefive~\name takes the dependency graph, the cascade scenarios, the event duration with the backup of each facility, and the measured rerouting, and selects the substations or facilities to harden (\S\ref{section:shield}).}
\label{fig:methodology}
\vspace{-8pt}
\end{figure}

Figure~\ref{fig:methodology} shows the \tool pipeline.
Because the substation that feeds a facility is a private arrangement with its grid operator, we assemble this dependency from open data in \circleone--\circlethree.
We extract the high-voltage substations and lines of Europe from OpenStreetMap, locate each facility from PeeringDB, and link each facility to its three nearest substations as candidate feeders.
We validate the resulting map against the registries of two transmission operators and the feeder map of the Dutch network operators (\S\ref{section:validation}).
In \circlefour, we measure on the PEERING testbed where the traffic of a failed site goes (\S\ref{section:routing_results}) and follow grid events over this map to the facilities they cut off (\S\ref{section:grid_results}).
In \circlefive, \name selects the substations and facilities to harden (\S\ref{section:shield}).

\subsection{\texorpdfstring{\circleone~}{1.~}Data Sources}\label{sec:method:data}
\label{section:datasets}

\tool takes the transmission grid from OpenStreetMap (OSM)~\cite{openstreetmap}, which volunteers build from aerial imagery, GPS traces, and field surveys.
Beyond roads and buildings, its contributors map power infrastructure, tagging substations with their type and voltage~\cite{osmwiki_substation} and lines with their voltage.
To our knowledge, OSM is the only open source that maps the grid down to 110~kV across the continent with geographic coordinates.
Open grid models of Europe build on it~\cite{gridkit2017,osmTGmod,xiong2025pypsaeur}.

\tool takes the Internet facilities from PeeringDB~\cite{peeringdb}, the registry in which network operators record their interconnection infrastructure.
For each facility, PeeringDB records the location and the IXPs that operate from it, the two attributes we need to link the Internet to the grid.
Lodhi et al.\ compared PeeringDB with BGP data and found that its membership represents transit, content, and access networks across regions and is generally up to date, although small IXPs are under-represented~\cite{lodhi2014using}.
Prior measurement studies use it as the ground truth for facilities~\cite{giotsas2017detecting,milolidakis2019detecting,salamatian2024metascritic}.

\subsection{\texorpdfstring{\circletwo~}{2.~}Power Grid}\label{sec:method:grid}

Because most open grid models extract only the grid at $\geq$220~kV~\cite{scigrid_01,Horsch_2018,xiong2025pypsaeur}, they miss the 110--220~kV tier, the nearest high-voltage source for most facilities (\S\ref{sec:method:glance}).
We therefore build the transmission grid from every substation and line at $\geq$110~kV that OSM maps in Europe.
OSM, however, does not separate transmission from distribution substations, and its lines rarely end exactly at a substation.

\begin{figure}[t]
\centering
\includegraphics[width=\linewidth]{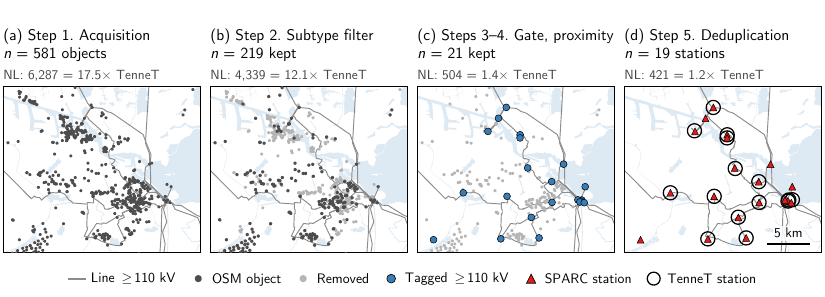}
\caption{The five extraction steps over Amsterdam, with the objects kept in the area ($n$) and, in grey, those kept in the whole Netherlands as a multiple of the 360 stations that TenneT publishes. Each step removes distribution objects, and after Step~5 every TenneT station in the area lies within 364~m of a \tool station.}
\label{fig:extraction_steps}
\vspace{-8pt}
\end{figure}

OSM uses one tag, \code{power=substation}, for everything from a street-level distribution kiosk to a 380~kV switching yard.
We separate the transmission substations from the rest in five steps, which Figure~\ref{fig:extraction_steps} shows on Amsterdam.
Step~1 reduces every OSM object to one point and converts its voltage tags to kilovolts.
Step~2 then removes the subtypes that the OSM tagging convention defines as non-transmission, such as distribution kiosks and railway traction supplies~\cite{osmwiki_substation}, as published OSM grid extractions also do~\cite{gridkit2017,osmTGmod,scigrid_01}.
Step~3 keeps every object tagged at $\geq$110~kV, passes objects with a name but no voltage tag to Step~4, and discards the rest.
Because many real transmission substations carry a name but no voltage tag, Step~4 keeps such an object only if a high-voltage line passes nearby.
The line must lie within 1{,}000~m at $\geq$220~kV and within 500~m at 110--220~kV, a tighter radius because distribution yards often sit next to these lines.
Finally, OSM often maps one substation as several objects, such as busbars and transformer bays.
Step~5 therefore merges objects within 300~m of each other into one substation with DBSCAN and assigns it the highest voltage among them.
The five steps yield 20{,}085 substations in the scope.
Appendix~\ref{app:extraction_steps} gives every rule and the objects each step removes.

OSM lines rarely end exactly at a substation, and joining them at their endpoints would leave the grid in fragments.
We instead build the \emph{transmission graph} by drawing a 2.0~km buffer around each line and connecting the consecutive substations inside it.
GridKit connects a substation to a line that passes over it in the same way~\cite{gridkit2017,wiegmans2016topology}.
Our buffer is wider because we represent each substation by one point and not by its footprint.
The graph keeps the substations and lines beyond the scope boundary so that no line is cut there, and it has 37{,}180 edges with a mean degree of 2.99.

\subsection{\texorpdfstring{\circlethree~}{3.~}Facilities and Their Candidate Feeders}\label{sec:method:join}

From PeeringDB, we take every European facility that at least one IXP lists as one of its sites and whose coordinates the registry records, resulting in 609 facilities.
Within the scope, 552 of them remain, which host 275 IXPs and 1{,}394 IXP presences.
A facility draws power from a distribution grid that nearby transmission substations supply~\cite{rinaldi_2001,jyothi2023characterizing}.
Because OSM does not map the distribution grid at European scale, we cannot trace this path directly.
We therefore link each facility to its $k$ nearest transmission substations within 15~km, its \emph{candidate feeders}, with distances measured in the European projection EPSG:3035.
Together, these links form a \emph{bipartite dependency graph} between facilities and substations, with $k$ ranked edges per facility.
Because the nearest substation of two facilities lies beyond 15~km, our graph covers 550 facilities and 1{,}392 IXP presences.

Facilities sit close to the transmission grid, with the median facility 1.22~km from its nearest substation.
Proximity is therefore a plausible proxy for the feeder, which we test against the feeders that the Dutch network operators publish (\S\ref{sec:method:validation}).
We set $k=3$, because a larger $k$ barely raises the share of Dutch facilities whose published feeder is among the candidates (Table~\ref{tab:feeder_validation}).
Varying $k$ from 1 to 5 changes the share of IXP presence that loses power in our failure analysis by at most 4 percentage points (\S\ref{sec:results:sensitivity}).
The accuracy of the map therefore rests on whether the true feeder is among the candidates, not on how many there are.

\label{sec:method:glance}
Linked this way, the 550 facilities draw on 984 distinct candidate substations, and the nearest of them operates at 110--220~kV for 74.1\% of the facilities.
For the facilities that host 56.8\% of IXP presence, all candidates operate at 110--220~kV, a tier that open grid models with a 220~kV floor omit.
\emph{Only 44 of the 984 candidate substations, under 0.3\% of those in the scope, are the nearest high-voltage source for half of all IXP presences} (Appendix~\ref{app:hotspots}).
Facilities also share whole sets of candidate substations, as 205 facilities holding 49\% of IXP presence have exactly the same set as another facility.
We test in \S\ref{sec:results:bipartite} whether these shared substations and candidate sets make the dependency more concentrated than on randomized versions of the map.

\subsection{Validation}
\label{section:validation}
\label{sec:method:validation}

Because the map rests on volunteer data and a proximity rule, we validate it against references that do not derive from OpenStreetMap.
To test whether \tool places the substations where they are, we compare them with the registries of two transmission operators, TenneT in the Netherlands and RTE in France.
Across the whole scope, we compare the number of substations per country with PyPSA-Eur.
To test whether the candidate set of a facility contains its actual feeder, we use the feeder that the Dutch network operators publish for each postcode (Table~\ref{tab:validation_overview}).

We calibrate the extraction thresholds against the three substation references and apply them unchanged in every country.
TenneT released its stations at $\geq$110~kV for use in OSM~\cite{osmwiki_nl_tennet}, and French volunteers check the grid against the open data of RTE~\cite{osmfr2020reseau}.
The two registry tests therefore measure what the extraction keeps rather than how completely OSM maps a country, which the comparison with PyPSA-Eur measures instead.
Substations are dense in the Netherlands, and almost any point there lies within a few kilometers of one.
A test that matches by distance can thus score well there by chance.
We therefore compare each Dutch test with a \emph{chance baseline}, its score when the \tool substations are displaced or the candidates are drawn at random.
This baseline keeps the density of \tool but not its locations.

\begin{table}[t]
\small
\begin{minipage}[b]{0.56\linewidth}
\caption{Validation of \tool against references independent of OSM, as recall / precision (\%) for substations and the share (\%) of facilities whose candidate set contains the feeder.}
\label{tab:validation_overview}
\end{minipage}\hfill
\begin{minipage}[b]{0.41\linewidth}
\caption{Share (\%) of the 61 Dutch facilities whose published feeder is among their $k$ nearest \tool substations.}
\label{tab:feeder_validation}
\end{minipage}

\smallskip
\begin{minipage}[t]{0.56\linewidth}
\centering
\setlength{\tabcolsep}{3pt}
\begin{tabular}{@{}llcc@{}}
\toprule
Reference & Scope & \tool & Chance \\
\midrule
TenneT & NL, 360 subst. & 98.1 / 86.2 & 15.0 / 13.0 \\
RTE & FR, 930 subst. & 96.7 / 98.5 & -- \\
PyPSA-Eur & Scope, $\geq$220~kV & Count $r$=0.983 & -- \\
\midrule
Capacity map & NL, 61 facilities & 95.1 & 25.7 \\
\bottomrule
\end{tabular}
\end{minipage}\hfill
\begin{minipage}[t]{0.41\linewidth}
\centering
\setlength{\tabcolsep}{3pt}
\renewcommand{\arraystretch}{1.25}
\begin{tabular}{@{}lrrrrrr@{}}
\toprule
Region & $n$ & $k=1$ & $\leq$2 & $\leq$3 & $\leq$4 & $\leq$5 \\
\midrule
Amsterdam  & 20 & 55.0 & 65.0 & 85.0 & 90.0 & 100 \\
Rest of NL & 41 & 68.3 & 90.2 & 100 & 100 & 100 \\
\midrule
All        & 61 & 63.9 & 82.0 & 95.1 & 96.7 & 100 \\
\bottomrule
\end{tabular}
\end{minipage}
\vspace{-8pt}
\end{table}

In the Netherlands, \tool recovers 353 of the 360 substations at $\geq$110~kV in the TenneT registry (recall 98.1\%).
A registered substation counts as recovered when a \tool substation lies within 2~km.
Six of its seven misses are offshore platforms, and a matched \tool substation lies a median 12~m from its registered location, inside the same yard.
Of the 421 Dutch \tool substations, 86.2\% have a TenneT substation within 2~km.
Most of the others are points where a line passes from cable to overhead, which OSM tags as substations but TenneT does not register (Appendix~\ref{app:validation_details}).
Shifting every \tool substation by 5~km lowers recall to the chance baseline of 15.0\% and precision to 13.0\%.
The match thus reflects where \tool places the substations rather than their density.
Across the five extraction steps (Figure~\ref{fig:extraction_steps}), recall stays at 98.1--98.3\% while precision rises from 31.5\% for raw OSM objects to 86.2\%.
The steps thus remove distribution equipment but keep transmission substations.
Appendix~\ref{app:validation_details} repeats the test at four match radii and against random points.

In France, the test matches codes rather than distances, which excludes chance agreement.
RTE publishes a code for each of its 1{,}066 substations at $\geq$150~kV~\cite{rte}, and OSM carries the same code in the tag \code{ref:FR:RTE}~\cite{osmwiki_refrte}.
Of these substations, 98 are traction, generation, distribution, or industrial sites or lie below the voltage threshold, which the extraction removes by design.
Another 38 carry a code absent from OSM (Appendix~\ref{app:validation_details}).
\tool recovers 899 of the remaining 930 (recall 96.7\%), and 899 of the 913 \tool substations with a code match an active RTE substation (precision 98.5\%).

For the comparison of counts across the scope, we use the release of PyPSA-Eur built from the ENTSO-E grid map~\cite{Horsch_2018} and count substations at $\geq$220~kV, the voltage floor of this model.
We use this release because a reference built from OSM would share the gaps and errors of our own source and could not reveal them.
We compare counts and not positions, because this release takes the coordinates of its substations from the GridKit extract of the ENTSO-E interactive map~\cite{Horsch_2018}, whose documentation states that the map favors topological clarity over geographic accuracy~\cite{wiegmans2016entsoe}.
In the Netherlands, where TenneT publishes the coordinates of its substations at $\geq$220~kV, the nearest PyPSA-Eur substation lies a median of 2.1~km from a TenneT substation, against 79~m for the nearest \tool substation.
The counts per country correlate at $r$=0.983, which shows that \tool finds the most substations in the countries where PyPSA-Eur does.
\tool exceeds the PyPSA-Eur count in each of the five countries with the most facilities, whereas in Ukraine, where coverage is thinnest, it finds only 52\% of that count.
In the Netherlands, \tool extracts 57 substations at $\geq$220~kV against the 55 that TenneT registers, whereas PyPSA-Eur lists 33.
Here, the higher \tool count reflects a gap in the model rather than in OSM.
Appendix~\ref{app:validation_details} gives the totals, the rank correlation, and the countries where \tool exceeds the PyPSA-Eur count.

To our knowledge, the only public record of which substation feeds a data center is the capacity map of the Dutch network operators~\cite{capaciteitskaart}.
We can look up the feeders of 61 of the 70 Dutch facilities in this map (Appendix~\ref{app:validation_details} lists the other nine).
Table~\ref{tab:feeder_validation} shows that the nearest substation alone is the feeder for 63.9\% of the 61 facilities and that the share grows with each added candidate.
At $k=3$, the value we use (\S\ref{sec:method:join}), the candidate set contains the feeder for 95.1\% of the 61 facilities (95\% Wilson interval 86.5--98.3\%).
The three facilities whose feeder the candidates miss lie within 20~km of the center of Amsterdam, where the feeder is the fourth or fifth nearest substation.
Three substations drawn at random within 15~km of each facility contain the feeder for only 25.7\% of the facilities, the chance baseline of this test.
Because the map names one feeder per postcode, the test checks only the primary feeder and not the backups.
Together, the tests show that \tool places substations where both operators register them and that its counts per country track PyPSA-Eur across the scope.
The candidate sets of \tool also contain the published feeder for 58 of the 61 Dutch facilities whose feeder the capacity map records.

\begin{tcolorbox}[title={Takeaway \rqbox{1}}]
Most of Europe's IXP facilities are nearest to substations at 110--220~kV, a tier that open grid models with a 220~kV floor omit, and the facilities that host over half of European IXP presence have all their candidate feeders in this tier.
The dependency is also concentrated, as under 0.3\% of the mapped substations are the nearest high-voltage source for half of IXP presence.
\end{tcolorbox}

\section{Characterizing Internet Failover under Site Outages}\label{section:routing_results}

The map of \S\ref{section:methodology} links each IXP facility to the substations that may feed it, but it does not show where the traffic at these facilities goes when they lose power.
When a grid event lasts longer than the backup of these facilities, the networks that interconnect at them lose their BGP sessions, and BGP moves their traffic to other sites where the same networks interconnect.
These networks remain reachable only if the other sites draw power from a part of the grid that the same event does not take down.
The sites that receive the traffic depend on the routing decisions of each network, which become visible only when a site fails.
Because we cannot fail a production site, we emulate site outages with BGP withdrawals on the PEERING testbed~\cite{schlinker19peering}.
We measure where the addresses served by a failed site move and on which power grid their new sites lie, both when one site fails and when its nearby sites fail with it.

\subsection{\texorpdfstring{\circlefour~}{4.~}Emulating Site Outages on PEERING}\label{sec:method:peering}

\begin{wraptable}{r}{0.42\linewidth}
\vspace{-33pt}
\centering
\small
\setlength{\tabcolsep}{3pt}
\renewcommand{\arraystretch}{1.0}
\begin{tabular}{@{}llrrr@{}}
\toprule
Site & Area & Fac. & IXP pres. & Subst. \\
\midrule
Amsterdam & CESA & 19 & 156 & 7 \\
Frankfurt & CESA & 30 & 159 & 11 \\
Madrid & CESA & 9 & 29 & 4 \\
Paris & CESA & 16 & 37 & 9 \\
Warsaw & CESA & 10 & 52 & 7 \\
Stockholm & Nordic & 8 & 51 & 5 \\
\bottomrule
\end{tabular}
\caption{PEERING sites in the scope, with their synchronous area, the PeeringDB facilities within 10~km, the IXP presences at these facilities, and their distinct nearest substations. London and Manchester lie outside the scope.}
\label{tab:peering.site-grid}
\vspace{-10pt}
\end{wraptable}
PEERING lets researchers announce their own IP prefixes with BGP and exchange routes and traffic with hundreds of neighboring networks.
Our experiments use the 32 PEERING sites in data centers of the Vultr cloud, which span all six continents.
Six of these sites lie in metropolitan areas of the scope, where the facilities within 10~km of each site hold 484 IXP presences, a third of those in the map (Table~\ref{tab:peering.site-grid}).
Because we do not know through which facilities of its metropolitan area a Vultr data center connects, we let each PEERING site stand for its whole metropolitan area.

We run two experiments from an anycast prefix that every site announces to all its providers.
Withdrawing this prefix from a site emulates the outage of its metropolitan area once the facilities of the area have exhausted their backup.
The traffic toward the prefix then moves to the sites that still announce it, which we call \emph{alternate sites}.
In \emph{isolated outages}, we withdraw the prefix from one site at a time, as when a metropolitan area loses power while the rest of the grid holds.
In \emph{cascading outages}, we withdraw the prefix first from one site, the \emph{epicenter}, and then from further sites one at a time in order of increasing distance.
Ordering by distance approximates a grid event that spreads across borders, as the 2006 split of the continental grid did~\cite{ucte2007split}, although a real event follows the transmission lines and the islands that they form.
From Amsterdam, for instance, the order fails London second, although Great Britain is a separate synchronous area that only direct-current links join to CESA.
We start cascades from Amsterdam, with the second-most IXP presences among the sites in the scope (Table~\ref{tab:peering.site-grid}), and from Madrid, inside the region of the Iberian blackout.
Outside Europe, we start them from New Jersey, with three nearby alternate sites, and Sydney, with one.

Before each step of either experiment, we withdraw the prefix from every site for 15~minutes so that every step starts from the same routing state, and we then keep the announcement configuration of the step active for 90~minutes (Appendix~\ref{app:peering}).
In every step, we ping 404{,}957 responsive addresses in 61{,}061 ASes, each in a distinct globally announced prefix.
These \emph{monitored addresses} belong to other networks, and their number is bounded by the probing rate that PEERING allows.
We record the site at which each reply arrives, and the monitored addresses whose replies arrive at a site form its \emph{catchment}.
We start measuring 15~minutes after each announcement, once routes have converged~\cite{garcia19convergence,bassett12lifeguard}.
To see the paths behind each catchment, we keep the first BGP route toward our prefixes that stays valid for 5~minutes~\cite{feldman04bgp} at the RIPE RIS~\cite{riperis} and RouteViews~\cite{routeviews} collectors.
We complement these routes with traceroutes from 4{,}259 RIPE Atlas probes~\cite{ripeatlas} every 20~minutes, and we map addresses to ASes with \textsf{bdrmapit}~\cite{marder18bdrmapit} and to countries with IPinfo~\cite{ipinfo}.

From these catchments, we derive the \emph{reroute distance} of a monitored address, the distance from the failed site to the site whose catchment the address joins after the outage.
The \emph{latency change} of a monitored address is its latency to the new site minus its latency to the failed site, each averaged over the monitored addresses of its $\langle$AS, country$\rangle$ pair.
In cascading outages, we measure both metrics against the epicenter and call the latency change there the \emph{latency inflation}.
We measure the latency from each site to every monitored address by announcing a prefix from that site alone, pinging each address from our router, and subtracting the latency between the router and the site (Appendix~\ref{app:peering}).
Withdrawing the prefix removes a whole metropolitan area at once, whereas its facilities lose power at different times as their backup runs out (\S\ref{section:limitations}).
Because the traffic on most inter-AS links is invisible to public measurements~\cite{salamatian2024metascritic,koch2021towards}, we count monitored addresses rather than traffic volume.
Because PEERING connects to other networks much as a cloud provider does~\cite{schlinker19peering}, our measurements show where the addresses served by a cloud network move when one of its sites fails, and how much latency the move adds.

\subsection{Alternate Sites and Their Power Grid}\label{sec:peering:isolated}
\label{sec:peering:grid}
\begin{figure}[t]
\centering
\includegraphics[width=\linewidth]{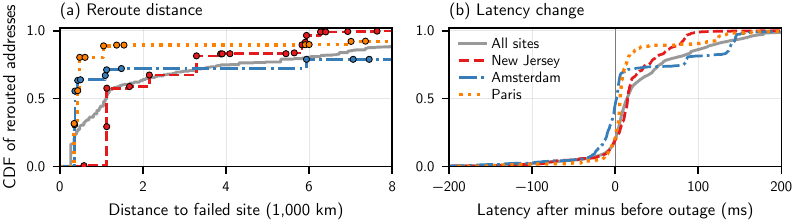}
\caption{Isolated outages, one PEERING site withdrawn at a time, for all failed sites together and for three of them. (a) Distance from the failed site to the alternate site that receives each rerouted monitored address, with one circle per alternate site. (b) Latency change, averaged per $\langle$AS, country$\rangle$ pair. Because most addresses move to a nearby site, over half of all addresses change latency by at most +20~ms.}
\label{fig:peering.isolated}
\label{fig:peering.isolated.distance}
\label{fig:peering.isolated.latency}
\vspace{-8pt}
\end{figure}

When a single site fails, most of its catchment moves to its four nearest alternate sites.
Over the 29 isolated outages, these four sites receive a median of 67\% of the catchment, and at least half in 23 outages (Table~\ref{tab:app.peering.results}, Appendix~\ref{app:peering.results}).
Figure~\ref{fig:peering.isolated}a shows these moves as steep first steps in the distribution of the reroute distance.
Grouping the addresses by AS or weighting each AS by its estimated user population changes this distribution little (Appendix~\ref{app:peering.results}).
How the addresses divide among the four sites depends on the routes that their networks choose rather than on distance.
When Warsaw fails, for example, its nearest alternate site, Stockholm, receives 2.0\% of its catchment, whereas Amsterdam, the third nearest, receives 40.3\%.

Some addresses move to another continent although nearby sites remain up, because after the outage their networks reach PEERING through a regional transit provider that connects to PEERING at only a few sites.
When Amsterdam fails, for example, 15\% of its catchment moves to Mexico, including the addresses of SURFnet (AS1103) and SeeWeb (AS12637).
Both networks exchange traffic with PEERING directly in Amsterdam, and after the outage they reach PEERING through EdgeUno (AS7195).
EdgeUno delivers their traffic at the Mexico site, one of the few sites where it provides transit to PEERING.
The destination of these detours thus depends on where PEERING buys transit rather than on distance.

Even before an outage, 60\% of the monitored addresses in Europe reach a site in another country, and an outage of that site moves them even when the grid of their own country keeps power.
Most of these addresses, 51\% of those in Europe, lie in a country without a PEERING site, and the other 9\% are routed abroad by their networks although their country has one.

To see whether failover leaves the grid of the failed site, we label each European site with its synchronous area (Table~\ref{tab:peering.site-grid}), and Figure~\ref{fig:peering.cross-grid} gives the area of the four nearest alternate sites of each.
\begin{figure}[t]
\centering
\includegraphics{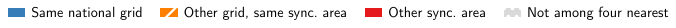}\par
\vspace{3pt}
\begin{minipage}[b]{0.57\linewidth}
\centering
\includegraphics[width=\linewidth]{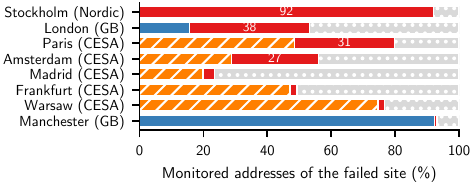}
\end{minipage}\hfill
\begin{minipage}[b]{0.41\linewidth}
\caption{Monitored addresses of each failed European PEERING site, by the grid of the alternate site that receives them, among its four nearest. For every site in CESA, most addresses that reach these sites stay in CESA (hatched), whereas most addresses of Stockholm, whose nearest sites lie in CESA, move to another synchronous area.}
\label{fig:peering.cross-grid}
\end{minipage}
\vspace{-8pt}
\end{figure}
Because PEERING has one site per country in the scope, a site in the scope and each of its alternate sites lie on different national grids.
When one of the five sites in CESA fails, \emph{51--97\% of the addresses that move to its four nearest alternate sites stay in CESA}, and these addresses make up 20--75\% of its catchment.
Although these alternate sites lie in other countries, CESA operates as one synchronous grid, which a single event split into three islands within 15~s in 2006~\cite{ucte2007split}.

When Stockholm fails, in contrast, 92.0\% of its catchment moves to CESA, because three of its four nearest alternate sites, Warsaw, Amsterdam, and Frankfurt, lie in CESA.
When Paris or Amsterdam fails, 27--31\% of its catchment moves to Great Britain, against under 4\% for the other three sites in CESA.
The shares that stay inside CESA are lower bounds, because 8--77\% of the catchment of a failed European site moves beyond its four nearest alternate sites, to sites whose grid Figure~\ref{fig:peering.cross-grid} does not show.
\name uses the share of each European catchment that moves to each other European site to count the IXP presence whose traffic fails over to a site that the same grid event takes down (\S\ref{sec:shield:routing}).

\subsection{Isolated versus Cascading Outages}\label{sec:peering:cascading}

Because most addresses move to a nearby site after an isolated outage, the median latency change over all isolated outages is +14~ms, and 56\% of the monitored addresses change by at most +20~ms (Figure~\ref{fig:peering.isolated}b).
In contrast, the 14\% that move to a distant site gain 100~ms or more.
For another 20\%, the latency falls, because the failed site was not their fastest site.

In cascading outages, the reroute distance and the latency inflation stay near their isolated values while a nearby alternate site remains up, and \emph{rise in a step once the last nearby alternate site fails} (Figure~\ref{fig:peering.cascading}).
\begin{figure}[t]
\centering
\includegraphics[width=\linewidth]{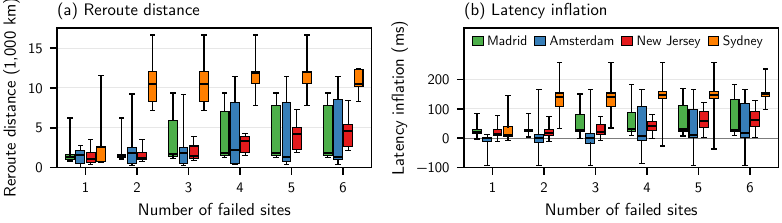}
\caption{Cascading outages for the monitored addresses in the catchment of each epicenter, as further sites fail in order of distance, starting with the epicenter alone (boxes give the 25th--75th percentile, whiskers the 5th and 95th). (a) Reroute distance. (b) Latency inflation. Both stay low while a nearby alternate site remains up and rise in a step once the last one fails, at the second failure for Sydney and the fourth for Amsterdam.}
\label{fig:peering.cascading}
\label{fig:peering.cascading.distance}
\label{fig:peering.cascading.latency}
\vspace{-8pt}
\end{figure}
For Amsterdam, whose catchment moves first to London, Frankfurt, and Paris, the 75th percentile of the reroute distance grows from ${\sim}$2{,}500 to ${\sim}$8{,}200~km at the fourth failure, once all three are down.
At the same failure, the 75th percentile of the latency inflation rises from ${\sim}$15 to ${\sim}$107~ms.
For Madrid, the step comes at the third failure, once Paris and London are down.
Outside Europe, the step comes at the fourth failure for New Jersey, after Toronto, Chicago, and Atlanta, and at the second for Sydney, after Melbourne.
The position of the step thus follows the number of nearby alternate sites, from one for Sydney to three for Amsterdam and New Jersey.

\begin{tcolorbox}[title={Takeaway \rqbox{2}}]
When the IXP facilities of a metropolitan area lose power, a median of two thirds of the monitored addresses that they serve move to the four nearest sites, and in continental Europe more than half of these addresses stay on the same synchronous grid.
The addresses reach another grid when the nearest sites lie in another synchronous area, or when the grid event takes these sites down as well, after which the addresses move to sites thousands of kilometers away.
\end{tcolorbox}

\section{Failure Analysis}
\label{section:grid_results}
\label{section:model}

The map of \S\ref{section:methodology} links each IXP facility to the substations that may feed it, and the PEERING experiments of \S\ref{section:routing_results} show that the traffic of a failed site moves mostly to nearby sites, often on the same synchronous grid.
The map and the experiments, however, do not show which facilities a grid event takes offline, or for how long, and an operator needs both answers to plan backup power.
We answer both questions on the measured map by following a grid event through the steps of Figure~\ref{fig:model}.
To compare an attack with a cascade, we describe both by the set $X$ of substations that the event de-energizes in step \stepA.
For an attack, $X$ holds the substations that it strikes, and for a cascade also every substation of the islands that lose power.
A facility then loses power (\stepB), and it goes offline only when the event outlasts its backup (\stepC).
How far an event spreads depends on the state of the grid when it occurs, and how long a facility holds out depends on the fuel that its operator stores and does not publish.
We therefore bound the spread by the documented splits of the continental grid and the backup by published fuel levels, and report each result as a range.

Because operators do not publish which substation supplies a facility, we call facility $f$ \emph{dark} when every substation in its candidate set $N(f)$, its three nearest transmission substations within 15~km (\S\ref{sec:method:join}), lies in $X$.
Facility $f$ hosts $w_f$ IXP presences, and we measure an event by the share of IXP presence at dark facilities,
\begin{equation}
\Phi(X) = \frac{\sum_{f \in F} w_f \, \mathbf{1}[N(f) \subseteq X]}{\sum_{f \in F} w_f}.
\label{eq:ixp_offline}
\end{equation}
We count IXP presences rather than IXPs because most IXPs operate from several facilities, and an IXP loses only part of its presence when one of them goes dark.
To report whole IXPs, we call an IXP \emph{down} when all its facilities are dark, and \emph{degraded} when some but not all are.
A degraded IXP keeps exchanging traffic at its powered facilities and disconnects only the members whose ports all lie in dark facilities.
When the candidate set contains the substation that actually supplies the facility, every facility that we call dark has truly lost power, and $\Phi$ can therefore understate an outage but not overstate it.
Adding candidates can only lower $\Phi$ (Proposition~\ref{prop:conservative}).

\subsection{Direct Loss of Shared Substations}
\label{sec:model:system}
\label{sec:results:bipartite}

\begin{figure}[t]
    \centering
    \includegraphics[width=0.8\linewidth]{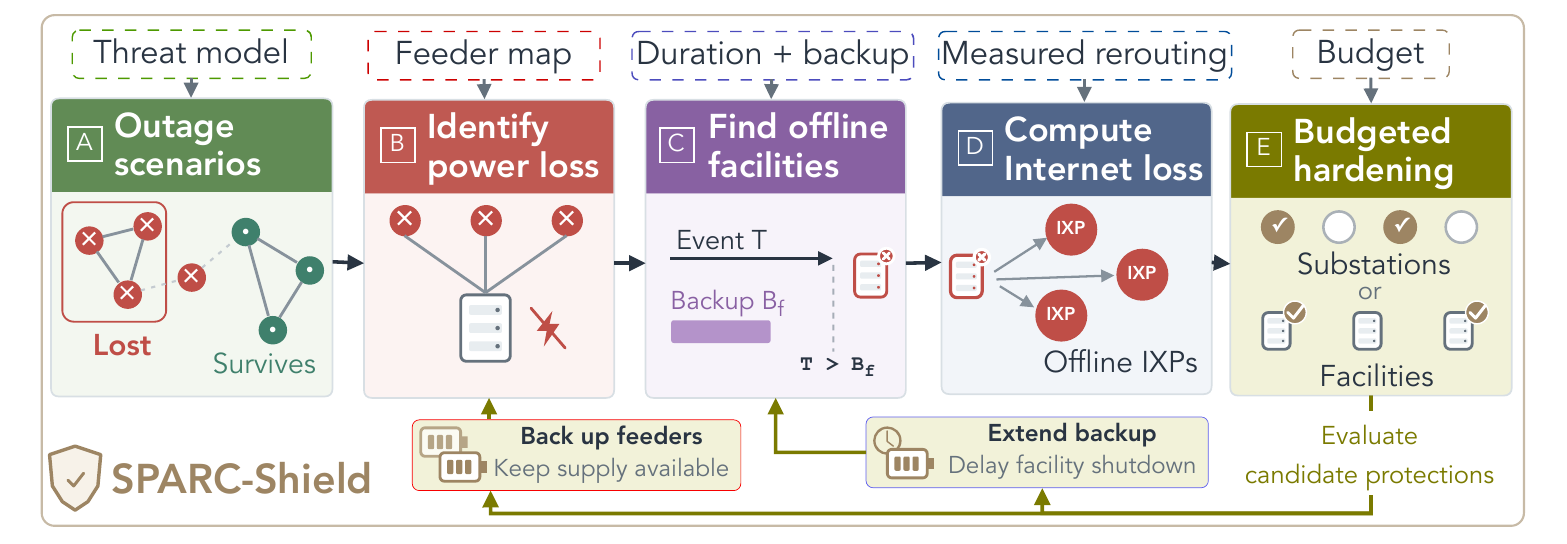}
    \caption{The steps by which a grid event reaches IXPs and their traffic, from the outage scenario (\stepA) through the facilities that lose power (\stepB) and go offline once their backup runs out (\stepC) to the Internet loss under the failover of \S\ref{section:routing_results} (\stepD) and the budgeted hardening (\stepE). The failure analysis runs steps \stepA to \stepC without hardening, and \name (frame) reruns them with hardening applied.}
    \label{fig:model}
    \vspace{-8pt}
    \end{figure}

Documented attacks on the grid disable a few substations directly, as in the rifle attack on the Metcalf substation in 2013 and the strikes on Ukrainian substations since 2022 (Table~\ref{tab:bg:taxonomy}).
How much IXP presence such attacks can cut off depends on how concentrated the dependency of facilities on substations is, which we measure as the number of substations that must fail before half of European IXP presence loses power.
We remove the 984 facility-connected substations one at a time, starting with the \emph{most-shared}, those that appear in the candidate sets of the most facilities, as in the targeted removal of network robustness studies~\cite{albert_2000,holme2002attack,schneider2011mitigation}.
When substations appear in equally many candidate sets, the one whose facilities host the most IXP presence goes first.
In this order, \emph{half of IXP presence loses power after 159 of the 984 substations (16.2\%)}, whereas random removal needs 780 (79.3\%) on average over 1{,}000 orders (Figure~\ref{fig:eu_attack_tolerance}, Appendix~\ref{app:calibrated}).
Breaking ties at random instead of by IXP presence raises the 16.2\% to 21.3--27.9\% over 200 orders, because many substations appear in equally many candidate sets.

Random removal, however, does not measure concentration, because the chance that it removes all three candidates of a facility depends on their number, not on which substations they are.
Random removal therefore cuts power to half of IXP presence after ${\sim}$79\% of the substations on any map with three candidates per facility.
To test whether the IXPs themselves concentrate on shared substations, we repeat most-shared removal on two altered maps, each drawn 1{,}000 times.
The \emph{shuffled map} moves the IXP presences across facilities but keeps which ones share a substation.
The \emph{rewired map} reassigns the candidates while every facility keeps its number of candidates and every substation its number of facilities, which preserves how many facilities share each substation but not which ones.
On the shuffled and rewired maps, most-shared removal needs 33.6\% and 39.3\% of the substations, and each of the 2{,}000 draws needs more than the measured 16.2\%.
The measured map needs fewer substations because its facilities with the most IXPs share their candidates.

Because of this sharing, facilities lose power more slowly than IXP presence, and half of the facilities are dark only after 38.3\% of the substations are removed.
Removing the first 159 substations takes 40 of the 275 IXPs down and degrades 104, because most of these IXPs keep some of their facilities.
Assigning three candidates to each facility could overstate how many substations must fail, because a real facility often draws its power from a single substation.
When we instead assign each facility one candidate, drawn with the rank probabilities measured on the Dutch capacity map (\S\ref{section:validation}), most-shared removal needs 12.5\% of the substations (10.8--14.0\% over 200 draws) and random removal 51.3\%.
Across every variant of the candidate set that we test, most-shared removal needs 2.5--4.9$\times$ fewer substations than random removal (Appendix~\ref{app:sensitivity_feeder}).
Within Amsterdam and Frankfurt, most-shared removal cuts power to half of the metro's IXP presence after 35\% of its candidate substations, against 81\% for random removal, and within Paris after 41\% (Appendix~\ref{app:national}).

\begin{figure}[t]
\centering
\includegraphics[width=\linewidth]{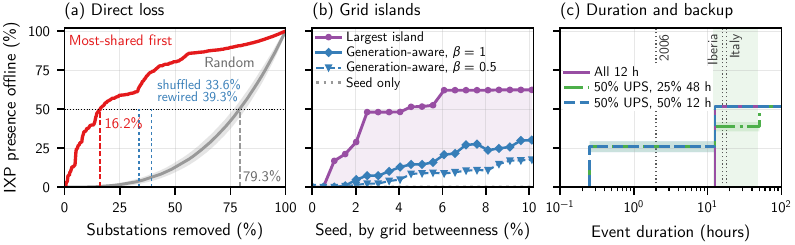}%
{\phantomsubcaption\label{fig:eu_attack_tolerance}}%
{\phantomsubcaption\label{fig:cascade_band_main}}%
{\phantomsubcaption\label{fig:outage_duration_main}}
\caption{IXP presence that a grid event cuts off (a) when it strikes facility-connected substations, (b) when a betweenness seed splits the grid, as a band between the two island rules, and (c) against event duration, with documented events dotted. Most-shared removal cuts power faster than random removal (band, $\pm 1$ s.d.) and than the same removal on the altered maps, islanding cuts off far more than the seed alone, and IXP presence goes offline only once the event outlasts backup, inside the shaded window of \name.}
\label{fig:failure_experiments}
\vspace{-8pt}
\end{figure}

\subsection{Spread Through Grid Islands}
\label{sec:method:cascade}
\label{sec:results:cascade}
\label{sec:model:failure}

Unlike an attack, a cascade reaches the substations that facilities depend on through the grid, without striking them.
When a few transmission substations fail, the grid can split into \emph{islands}, disconnected parts that must each balance their own generation and load, and an island that cannot do so loses power with every substation inside it.
We model islanding on the backbone, the largest connected part of the transmission graph, which stands in for the continental synchronous grid (Appendix~\ref{app:backbone}).
Substations with a high \emph{grid betweenness}, the share of shortest paths in the transmission graph that pass through them, are the most likely to split the grid when they fail.
We therefore seed a cascade with the set $S_0$ of the top $q\%$ of backbone substations by grid betweenness, and we compare this seed with grid-degree and random seeds.
Removing $S_0$ leaves the islands $C_1,\dots,C_m$, in descending size.
Of the 5\% of backbone substations with the highest betweenness, only 33 are facility-connected, and only 5 are among the 159 that most-shared removal takes first (\S\ref{sec:results:bipartite}).
Within Amsterdam, Frankfurt, and Paris, removing the candidate substations of each metro in order of grid betweenness cuts power as slowly as random removal.
A ranking by grid structure thus misses the substations that most facilities share.
Only the 382 facilities whose three candidates all lie in the backbone can lose power in a cascade, which caps the IXP presence that any cascade can take offline at 65.2\% (Proposition~\ref{prop:ceiling}).
Until \S\ref{sec:results:duration}, we count every dark facility as offline.

The documented splits of the continental grid ended in opposite ways, as the three islands of 2006 kept power after shedding load~\cite{ucte2007split}, whereas Italy in 2003 and the Iberian Peninsula in 2025 lost power after they separated~\cite{ucte2004italy,entsoe2026iberian}.
We therefore bound island survival with one rule for each outcome.
The \emph{largest-island rule}, from the two-stage cascade model~\cite{buldyrev_2010}, keeps only the largest island $C_1$ energized, and every other island loses power with the seed,
\begin{equation}
X_{\mathrm{isl}} = S_0 \cup \bigcup_{j \geq 2} C_j .
\label{eq:two_stage}
\end{equation}
The largest-island rule matches 2003 and 2025, because Italy and Iberia each separated as a smaller island and lost power.
The \emph{generation-aware rule} instead keeps island $C_j$ energized when the dispatchable capacity $G_j$ sited in it covers at least a share $\beta$ of its load $L_j$,
\begin{equation}
X_{\mathrm{gen}}(\beta) = S_0 \cup \bigcup_{j \,:\, G_j < \beta L_j} C_j ,
\label{eq:island_gen}
\end{equation}
where $\beta<1$ allows an island to shed load, as the islands of 2006 did.
We take the capacity of each island from the plants of powerplantmatching~\cite{gotzens2019ppm} and its load from the 2019 peak load of each country~\cite{opsd2020timeseries}, spread over the substations of that country.
The generation-aware rule matches 2006, because the three islands of 2006 hold 1.06--1.34$\times$ their load in present-day capacity.
For Italy, however, the generation-aware rule fails because it counts installed capacity, 1.30$\times$ the Italian load today, whereas Italian plants generated only 0.75 of the load when Italy separated in 2003, against 0.95--1.19 in the islands of 2006.
Each rule thus reproduces only one of the two outcomes (Table~\ref{tab:historical_check}).
Both rules ignore power flow, dispatch, and line ratings, and \name evaluates hardening under each of them (\S\ref{section:shield}).

\begin{figure}[t]
\centering
\includegraphics[width=0.97\linewidth]{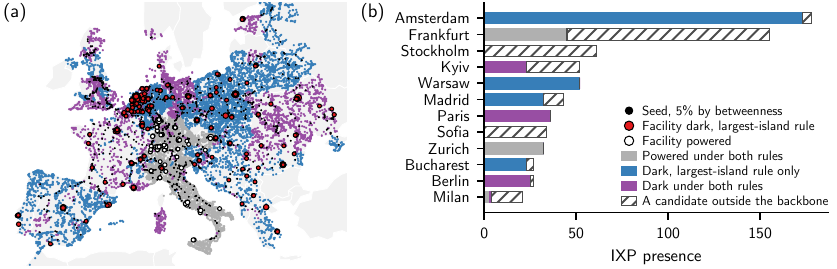}
\caption{(a) The backbone after a 5\% betweenness seed, with the facilities that a cascade can reach and each island colored by its outcome under the two island rules. (b) IXP presence of the twelve largest metros by outcome. Only the grey Alpine island keeps power under both rules, and hatched IXP presence keeps power under either rule because its facilities have a candidate outside the backbone.}
\label{fig:islands_map}
\vspace{-8pt}
\end{figure}

Removing the 5\% of backbone substations with the highest betweenness splits the backbone into 646 islands (Figure~\ref{fig:islands_map}a).
Of these islands, the largest-island rule keeps only the Alpine island of 4{,}629 substations energized, which holds the facilities of Frankfurt, Zurich, and Milan that a cascade can reach (Figure~\ref{fig:islands_map}b).
Every other island loses power, and the facilities whose candidates all lie in these islands host 51.7\% of IXP presence, among them those of Amsterdam, Warsaw, Paris, Berlin, Madrid, and Kyiv.
Under the largest-island rule, 88 of the 275 IXPs go down and 71 are degraded.
These 88 IXPs are more than twice the 40 that most-shared removal takes down at half of IXP presence, because the loss of an island darkens every facility of a metro at once.
Under the generation-aware rule with $\beta=1$, 151 of the 646 islands keep power and 14.6\% of IXP presence goes offline, with 28 IXPs down and 43 degraded.
Paris, Berlin, and Kyiv still go dark, because the capacity sited in their islands does not cover their load.
In Frankfurt, 110 of 155 IXP presences keep power under either rule, because each of their facilities has a 110~kV candidate that the OpenStreetMap lines do not connect to the backbone (hatched in Figure~\ref{fig:islands_map}b).
Across seed sizes (Figure~\ref{fig:cascade_band_main}), half of IXP presence goes offline under the largest-island rule at a 4.6\% betweenness seed, against ${\sim}$8\% for grid-degree seeds and 25\% for random seeds.
At a 5\% seed, 8.8--14.6\% goes offline under the generation-aware rule as $\beta$ rises from 0.5 to 1.
\emph{Together, the two rules place the outage at 8.8--51.7\% of IXP presence, whereas the seed alone takes 0.1\% offline}.

\subsection{Duration Against Facility Backup}
\label{sec:results:sensitivity}
\label{sec:model:duration}
\label{sec:results:duration}

A dark facility goes \emph{offline} only when the duration $T$ of the event exceeds its \emph{backup autonomy} $B_f$, the hours that its batteries and stored fuel keep it running.
Because colocation operators do not publish the backup autonomy of their facilities, we use the three classes of Table~\ref{tab:bg:backup} (Appendix~\ref{app:backup}).
The classes are UPS only (0.25~hours), the 12-hour fuel minimum of the Uptime Tier Standard~\cite{heslin2014fuel}, and 48~hours of stored fuel, the most that one hyperscale operator states for its data centers~\cite{microsoft2022fuel}.

With 12~hours of fuel at every facility, the IXP presence offline under the largest-island rule at a 5\% seed is a step, \emph{with nothing offline until 12~hours and all 51.7\% after it} (Figure~\ref{fig:outage_duration_main}).
The 2006 split, after which the grid was restored within 2~hours, lies left of the step, whereas the 2003 split and the 2025 Iberian blackout, which lasted 18 and 16~hours, lie right of it (Table~\ref{tab:bg:taxonomy}).
Extending autonomy from 12 to 48~hours therefore changes the outcome only for events of 12--48~hours, the window in which \name extends backup (shaded, \S\ref{section:shield}).
When autonomy differs across facilities, the step splits in two.
When half of the facilities hold UPS only, a 2-hour event already takes 25.8\% offline (22.3--29.5\% over 200 assignments, Appendix~\ref{app:duration}).
When a quarter of the facilities hold 48~hours instead of 12~hours, a 16-hour event takes 38.6\% offline rather than 51.7\%.

We test the duration rule on the Iberian blackout, the only documented event in the scope that outlasted 12~hours of fuel and has measurements of IXP traffic~\cite{entsoe2026iberian,hohlfeld2026lights}.
Removing every substation in Spain and Portugal darkens 29 facilities.
With 12~hours of fuel, the 13 IXPs that operate only from Spain go offline, because the Spanish grid returned after 16~hours.
The 6 IXPs that operate only from Portugal sit exactly at the threshold, because the Portuguese grid returned after 12~hours.
Eight operators nonetheless kept their Iberian data centers running~\cite{swinhoe2025iberian}, on more than 12~hours of fuel or on local supply that returned before the full grid.
The 12-hour minimum therefore gives the pessimistic end of the outage.
Traffic at four major Iberian IXPs still fell by 33--70\%~\cite{hohlfeld2026lights}, which shows that exchanged traffic falls even when facilities keep power.
For events shorter than facility backup, our analysis thus understates the loss of exchanged traffic.

At a 5\% betweenness seed, varying island survival or event duration alone changes the IXP presence offline from almost none to 51.7\% (Table~\ref{tab:sensitivity_full}).
When overloaded substations also fail before the islands form, the outage under the largest-island rule rises to 64.7\%.
The other inputs that we test are the number of candidates $k$, how they are chosen, the corridor buffer, the seed type, and the scope of the graph, and all but one change it by at most 6 percentage points.
Only the variant that links each facility to every substation within 15~km, a median of 22 candidates, lowers it to 25.6\%, because more facilities then have a candidate outside the backbone.

\begin{tcolorbox}[title={Takeaway \rqbox{3}}]
A split of the continental grid cuts power to between a tenth and half of European IXP presence, depending on which islands balance their generation and load, whereas the substations at which the split starts supply almost none of it.
The facilities in these islands go offline only when the grid stays down longer than their backup, which at the 12-hour fuel minimum of the industry standard happens in blackouts that last most of a day but not in splits that are restored within a few hours.
\end{tcolorbox}

\section{\name: Hardening Substations and Facilities}\label{section:shield}
\label{section:discussion}
\label{sec:discussion:hardening}

The failure analysis (\S\ref{section:grid_results}) showed two ways in which a grid event takes IXP presence offline.
An attack on a few substations that many facilities share cuts power to their facilities, and a split of the grid de-energizes whole islands.
In both cases, a facility goes offline only when the event outlasts its backup (\S\ref{sec:results:duration}).
Longer backup at selected substations or facilities can therefore prevent part of the outage.
Because an operator cannot extend backup everywhere, the question is \emph{which few substations or facilities to harden, and whether the answer depends on the threat}.
\name answers this question by repeating steps \stepA to \stepC of the failure analysis (Figure~\ref{fig:model}) with hardening applied at steps \stepB and \stepC.
It then selects, under a budget, the substations or facilities whose hardening minimizes the IXP presence offline (\stepD, \stepE).
The substations have to be selected together, because the protection that one of them adds depends on which others are already hardened.
For example, backing up the ten substations that serve the most IXP presence lowers the loss of an optimal attack on ten substations by only 51\% (Figure~\ref{fig:shield_curves}a).
The attack then moves to lower-ranked substations that are not backed up.

\begin{figure}[t]
\centering
\includegraphics[width=\linewidth]{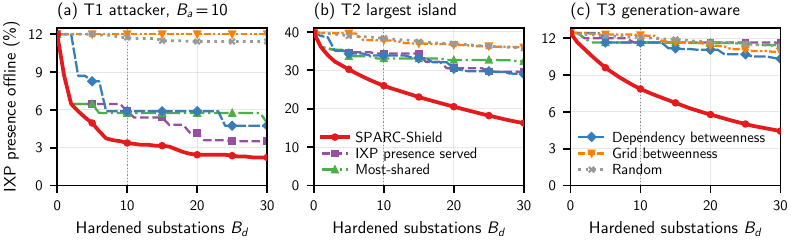}
\caption{IXP presence offline against the number of hardened substations $B_d$ under the three threat models (cascades averaged over four seeds), for a 16-hour event and 12~hours of facility backup. Above two substations, \name has the lowest loss, because the rankings level off once the attack moves (a) or the loss shifts to islands short of generation (c).}
\label{fig:shield_curves}
\vspace*{-1.7em}
\end{figure}

\subsection{Problem Formulation}
\label{sec:shield:formulation}

Because the substations that many facilities share are rarely those at which a cascade starts (\S\ref{sec:results:cascade}), \name takes the outages of step \stepA from three \emph{threat models}, one for the attack of \S\ref{sec:results:bipartite} and one for each island rule of \S\ref{sec:results:cascade}.
\begin{enumerate}[label=\textbf{T\arabic*}, leftmargin=*, nosep]
\item \emph{Attacker.} An attacker disables the $B_a$ substations whose loss takes the most IXP presence offline, as in the physical attacks of Table~\ref{tab:bg:taxonomy}.
Following the defender-attacker formulation~\cite{brown2006defending}, the attacker chooses these substations after the hardening is in place.
\item \emph{Largest-island cascade.} The 1, 2, 5, or 10\% of backbone substations with the highest grid betweenness fail (Figure~\ref{fig:cascade_band_main}).
Only the largest island keeps power, as in the collapses of Italy in 2003 and Iberia in 2025~\cite{ucte2004italy,entsoe2026iberian}.
\item \emph{Generation-aware cascade.} After the same seeds fail, every island whose dispatchable generation covers its load keeps power ($\beta = 1$ in \S\ref{sec:results:cascade}), as in the 2006 split~\cite{ucte2007split}.
\end{enumerate}

At step \stepB, an operator can \emph{back up a substation}, for example with a pre-positioned mobile generator~\cite{lei2018mobile}.
A backed-up substation stays energized for 48~hours, the longest backup of the failure analysis (\S\ref{sec:results:duration}), even when the rest of its island loses power.
Every facility that has this substation among its candidates keeps power for these 48~hours.
We assume that a backed-up substation also survives the attacker of T1, which is optimistic for an attack that destroys equipment.
At step \stepC, the operator can instead \emph{extend the backup of a facility} from the 12-hour fuel minimum to 48~hours, which protects only that facility.
When every facility holds the 12-hour minimum, both actions matter only for events of 12--48~hours (Proposition~\ref{prop:window}).
The minimum already bridges a shorter event, and a longer one also exhausts the 48~hours of the hardened units.
We therefore evaluate both actions for a 16-hour event, the duration of the Iberian blackout in Spain (\S\ref{sec:results:duration}).

At step \stepD, the loss is the IXP presence at the offline facilities, averaged over the four seeds under T2 and T3.
\name can add to it the \emph{stranded} IXP presence, whose traffic fails over to a nearby site that the same event takes down (\S\ref{section:routing_results}).
At step \stepE, \name selects the $B_d$ units that lower the loss the most.
Each unit is an existing substation or facility, and $B_d$ is the \emph{hardening budget}.
Against a cascade, whose outages are fixed, selecting substations is a maximum coverage problem, because a backed-up substation covers every facility that has it as a candidate.
Adding at each step the substation that protects the most additional IXP presence reaches at least $1-1/e$ (63\%) of the optimal protection~\cite{nemhauser1978analysis}.
For facility backup, where each facility protects only itself, this greedy selection is optimal (Proposition~\ref{prop:coverage}).
Against the attacker, \name must also find the worst attack for each hardening.
This problem is NP-hard, because it contains the densest-subgraph problem of fixed size (Proposition~\ref{prop:attacker}).
Because a greedy attacker underestimates the loss by up to 94.8\% (Appendix~\ref{sec:shield:solver}), \name computes the attack exactly with a mixed-integer program.
At each greedy step, \name considers only the candidates of the facilities that this attack takes offline, because only these substations can lower its loss.

\label{sec:model:properties}

\label{sec:shield:setup}

\subsection{Evaluation}
\label{sec:shield:results}
We evaluate \name on the 1{,}392 IXP presences at the 550 facilities of the map (\S\ref{sec:method:join}), with budgets $B_d$ from 0 to 30 units.
Unless stated otherwise, the attacker of T1 disables ten substations ($B_a = 10$).
An exact solver finishes for budgets of up to five units under T1 and up to 30 under T2 and T3 (Table~\ref{tab:shield_gap}).
At these budgets, the greedy selection is optimal under T3, within 2.3\% of the optimum under T2, and within 11 IXP presences under T1.
As baselines, we take five rankings that an operator can compute without \name and report the best of them at each budget.
Four of them rank substations by how many facilities they may feed (most-shared, \S\ref{sec:results:bipartite}), by grid betweenness (\S\ref{sec:results:cascade}), by betweenness in the dependency graph, and by the IXP presence they serve.
The fifth is a random order, averaged over 20 draws.

\emph{Above two substations, \name attains a lower loss than every ranking under all three threats} (Figure~\ref{fig:shield_curves}).
With ten backed-up substations, \name removes 72\% of the loss under T1, 35\% under T2, and 37\% under T3, against 52\%, 17\%, and 6\% for the best ranking (Table~\ref{tab:shield_headline_full}, Appendix~\ref{app:shield_more}).
The advantage of \name grows from a factor of 1.4 under T1 to 5.9 under T3.
Under T1, the best ranking places first the substations that many facilities share, which the attacker also targets.
Under T3, the loss depends on which islands lack generation, which the dependency and transmission graphs do not encode.

Under T1, the optimal attack on ten substations takes 167 IXP presences (12.0\%) offline without hardening, 121 of them in Amsterdam and 46 in Frankfurt.
The ten substations of \name include a candidate of every facility that this attack takes offline.
With them backed up, the optimal attack moves to Stockholm and Zurich, where it takes 47 IXP presences (3.4\%) offline and no IXP down.
In contrast, the most-shared order, the best ranking at this budget, has only its first two substations in common with \name.
Four of its other eight substations lie in Kyiv, Bucharest, and Sofia, far from the substations that the attack strikes.
The attack therefore still takes 80 IXP presences (5.7\%) offline, 30 of them in Frankfurt.
Ranking by grid betweenness leaves the loss at 12.0\% at every budget, because this ranking selects the substations at which cascades start.

Under T2, backing up the ten substations of \name lowers the mean loss from 39.8\% to 26.0\% of IXP presence, against 33.1\% for the best ranking.
At the 5\% seed (Figure~\ref{fig:islands_map}), 719 IXP presences go offline under the largest-island rule.
Of these, 259 lie in Amsterdam, Warsaw, Bucharest, and Madrid, where \name places its first four substations (Table~\ref{tab:shield_sets}).
With all ten backed up, the loss in these four metros falls to 46 IXP presences.
In contrast, four of the ten substations of the best ranking lie in Frankfurt, inside the Alpine island that keeps power under the largest-island rule.
Backing up the ten substations of this ranking keeps 143 IXP presences online, 120 of them in Amsterdam.
Under T3, \name reduces the mean loss from 12.4\% to 7.9\%, whereas the best ranking reduces it only to 11.7\%.
At the 5\% seed, the offline facilities lie in the islands around Paris, Berlin, Kyiv, Tallinn, and D\"usseldorf, which lack the generation to cover their load.
\name reduces the loss of these facilities from 203 to 125 IXP presences.
Under the best ranking, all of them stay offline, because six of its ten substations lie in Amsterdam and Frankfurt, which keep power under the generation-aware rule.

\label{sec:shield:where}
\label{sec:shield:robustness}

\begin{figure}[t]
\begin{minipage}[t]{0.47\linewidth}
\centering
\captionof{table}{Substations that \name selects in the first five greedy steps under each threat model, by the metro of the facilities they feed, with the IXP presence they serve in parentheses.}
\label{tab:shield_sets}
\footnotesize
\setlength{\tabcolsep}{2pt}
\begin{tabular}{@{}llll@{}}
\toprule
& \makecell[l]{T1\\attacker} & \makecell[l]{T2\\largest island} & \makecell[l]{T3\\generation-aware} \\
\midrule
1 & Amsterdam (104) & Amsterdam (104) & Frankfurt (43) \\
2 & Frankfurt (77) & Warsaw (30) & Paris (10) \\
3 & Stockholm (42) & Bucharest (27) & Tallinn (10) \\
4 & Frankfurt (35) & Madrid (19) & Kyiv (14) \\
5 & Kyiv (26) & Prague (14) & Paris (9) \\
\bottomrule
\end{tabular}
\end{minipage}\hfill
\begin{minipage}[t]{0.51\linewidth}
\centering
\vspace{0pt}
\includegraphics[width=\linewidth]{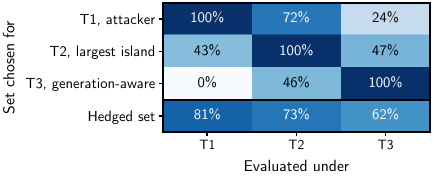}
\caption{Protection ratio at $B_d = 10$, the IXP presence that the set chosen for the threat of a row protects under the threat of a column, relative to the set chosen for the column. Off-diagonal ratios of 0--72\% arise because the threats take different facilities offline.}
\label{fig:shield_robustness}
\end{minipage}
\vspace{-8pt}
\end{figure}

Because the threats take different facilities offline, the first substations that \name selects lie in different metros under each threat (Table~\ref{tab:shield_sets}).
The \emph{protection ratio} of a set under a threat is the IXP presence it protects there, relative to the set chosen for that threat (Figure~\ref{fig:shield_robustness}).
Off the diagonal, the ratio is 0--72\%, and it is zero for the set chosen for T3 under T1.
We also select a \emph{hedged set} that minimizes its mean loss over the three threats, each relative to the set chosen for that threat.
This set keeps 62--81\% of the protection (Figure~\ref{fig:shield_robustness}, bottom row).
\emph{An operator therefore has to choose the threat before the substations, or accept the 19--38\% of protection that the hedged set gives up.}

\label{sec:shield:duration}

\begin{figure}[t]
\centering
\includegraphics[width=\linewidth]{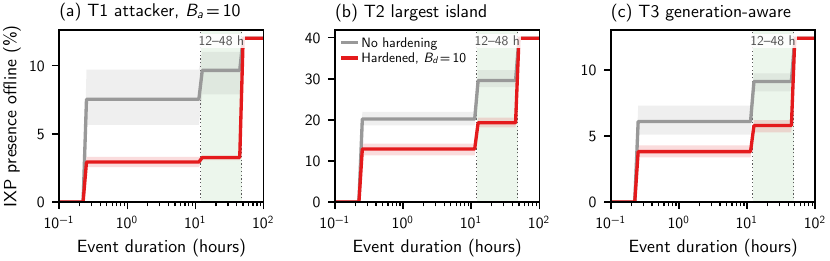}
\caption{IXP presence offline against the event duration $T$, without hardening and with the ten substations that \name selects for each threat, when facilities hold UPS only, 12~hours, or 48~hours of backup with probability 1/2, 1/4, and 1/4 (bands, 5th--95th percentile over 20 draws). The backed-up substations lower the loss up to 48~hours, when their own backup runs out.}
\label{fig:shield_duration}
\vspace{-8pt}
\end{figure}

Because facility backup is private, we assign each facility UPS only, 12~hours, or 48~hours with probability 1/2, 1/4, and 1/4 (Figure~\ref{fig:shield_duration}).
Under this assignment, a 2-hour event such as the 2006 split~\cite{ucte2007split} already takes IXP presence offline, because it outlasts the 15 minutes of UPS.
Backing up the ten substations of \name reduces the loss of this event by 36--61\%, because a backed-up substation keeps its facilities energized whatever their own backup.
A set chosen with the backup of each facility known protects only 4--10 more IXP presences than the set chosen for 12~hours everywhere.
\name can thus select its units without knowing the backup of each facility.

Per hardened unit, substation backup protects 1.3--2.0$\times$ as much IXP presence as facility backup, because one substation covers every facility that has it as a candidate (Figure~\ref{fig:shield_levers}).
If the attack destroys the backed-up substations it strikes, extending the backup of ten facilities still removes 54.5\% of the loss under T1.
Once a substation costs twice as much to harden as a facility under T1, or three times as much under T2 and T3, \name spends most of its budget on facilities (Table~\ref{tab:shield_cost}).
Because a selection over both actions protects at least as much as either alone, an operator who knows its cost ratio can leave this choice to \name.

To count stranded IXP presence, we group the facilities within 10~km of each PEERING site into a \emph{metro}.
A metro is down when at least half of its IXP presence is offline.
The stranded IXP presence of a down metro is the share of its IXP presence whose monitored addresses fail over to another down metro in the PEERING experiments.
This count is a lower bound, because the experiments locate the alternate site of only 23--92\% of the monitored addresses of a site.
At the 5\% seed of T2, Amsterdam, Paris, Warsaw, and Madrid are down at once.
Between 7\% and 40\% of the monitored addresses of each fail over to another of the four (Figure~\ref{fig:shield_failover}).
These shares give 45.6 stranded IXP presences, most of them from Warsaw and Paris, whose traffic fails over to Amsterdam.

An attacker who disables twenty substations to maximize offline and stranded IXP presence together takes 262 IXP presences offline.
It strands another 121.8, which raises its loss by 46\%.
Although its objective ignores stranded traffic, backing up the first five substations of \name removes all stranding under both cascades and against this attacker.
These five substations keep Amsterdam, Madrid, and Warsaw up at the 5\% seed of T2 and Amsterdam and Frankfurt up against the attacker.
Adding stranded traffic to the objective changes two of them when a metro counts as down at a quarter of its IXP presence, and at most one at a higher share (Table~\ref{tab:shield_threshold}).

\label{sec:shield:levers}

\label{sec:shield:routing}

\begin{tcolorbox}[title={Takeaway \rqbox{4}}]
An operator prevents the most IXP outages by backing up the substations that \name selects for the expected threat, unless a substation costs a few times as much to back up as a facility.
This selection depends on the threat, because attacks and grid splits take different facilities offline, and a set chosen for one threat loses at least a quarter of its protection under another threat, and in one case all of it.
\end{tcolorbox}

\section{Conclusion}
\label{section:conclusion}

\tool is, to our knowledge, the first study that maps the interdependence of the power grid and Internet infrastructure from the substation to the IXP and to its traffic.
We map 20{,}085 high-voltage substations from OpenStreetMap, link 552 PeeringDB facilities hosting 275 IXPs to their candidate feeders, and validate both against operator records in the Netherlands and France.
The dependency is concentrated, as half of European IXP presence loses power once the most-shared 16.2\% of these feeders fail, less than half of what the shuffled and rewired maps need.
Our PEERING measurements show that most monitored addresses of a failed metropolitan area move to nearby sites, which in continental Europe often share its synchronous grid.
Propagation causes most of the outage of a grid event, as losing the 5\% most central backbone substations cuts power to 0.1\% of IXP presence directly but to 8.8--51.7\% once the grid splits.
This loss of power becomes an outage only when the event outlasts the 12-hour fuel minimum of facility backup, as the Iberian blackout did in Spain.
Building on these findings, \name selects the substations or facilities to harden under a budget.
Against an adversary who disables ten substations, hardening the ten that \name selects removes 72\% of the IXP presence taken offline, whereas the best standard ranking removes 52\%.
Because a set chosen for one threat retains only 0--72\% of its protection under another, the hardening decision depends on the threat model that the operator adopts.

\label{section:limitations}
\label{sec:shield:limits}

\smallskip\noindent\textbf{Limitations and future work.}
First, because operators do not publish which substation feeds a facility, we take its three nearest substations as candidate feeders and validate them only against the primary feeders of 61 Dutch facilities.
A facility may, however, draw its power through several medium-voltage connections, and the accuracy of the candidate feeders outside the Netherlands remains to be tested.
Second, OpenStreetMap omits the inner-city 110~kV cable network of Frankfurt, and our cascades may therefore understate the outage there.
Third, our island rules decide whether an island keeps power from its topology and installed capacity, whereas a real island survives only if its inertia, reserves, and under-frequency load shedding keep its frequency within operating limits~\cite{ucte2007split}.
Fourth, IXP presence measures the loss of the sites where networks exchange traffic, but not the loss of demand when IXP members and their customers lose power.
This loss of demand may explain why traffic fell at the Iberian IXPs whose facilities remained on backup power.
Finally, our PEERING experiments withdraw a prefix from an entire metropolitan area at once, whereas its facilities fail at different times as their backup runs out.
The catchment of our anycast prefix only approximates how the members of a failed IXP reroute the traffic they exchange with one another.

We will validate the candidate feeders beyond the Netherlands with records from grid and facility operators, and add inner-city cable networks where operators publish them.
To predict which islands survive a split, we will couple the map with a dynamic model of the European grid run on the dispatch of documented events.
On PEERING, we will fail sites as a grid split exhausts their backup, and combine the results with IXP traffic data to quantify the traffic that a grid event displaces.
Because \tool requires only open data, its map can be rebuilt wherever OpenStreetMap maps the transmission grid.
We hope that the release of our dataset and code starts a community effort to measure what this study had to infer.

\newpage
\bibliographystyle{unsrt}
\bibliography{references}

\newpage

\appendix

\section{Use of Generative AI}\label{app:genai}

We used generative AI tools to improve the grammar and readability of this paper.
All technical content, analysis, and interpretation are entirely our own.

\section{Background Material}
\label{app:part_background}

\subsection{Backup Power at Facilities}
\label{app:backup}

Table~\ref{tab:bg:backup} lists the two stages of facility backup and the autonomy that standards and operators state for each.
The Uptime Institute Tier Standard requires 12~hours of fuel at design load at every Tier~\cite{uptime_tier_topology,heslin2014fuel}, because a higher Tier adds redundant components and paths rather than a longer reserve.
The standard leaves the UPS runtime to the operator~\cite{heslin2017myths}.
The European standard EN~50600-2-2 also defines its availability classes by redundancy, and its certification criteria require fuel storage and a delivery contract without stating hours~\cite{en50600_2_2,cencenelec2021gdc,tuvnord2024en50600}.
Because 12~hours is the only fuel reserve that these standards state, the failure analysis takes it as the minimum backup (\S\ref{sec:results:duration}).
As the longest backup, it takes 48~hours, the most that one hyperscale operator states~\cite{microsoft2022fuel}.
\begin{table}[!htbp]
\centering
\small
\caption{Backup power stages at data center facilities and the autonomy stated for each. Autonomy beyond the fuel stored on site depends on deliveries, which a wide-area grid outage may disrupt.}
\label{tab:bg:backup}
\begin{tabularx}{\linewidth}{@{}l >{\raggedright\arraybackslash}X l l@{}}
\toprule
Stage & Function & Autonomy & Source \\
\midrule
UPS, battery & Carries the load until generators start & 5--10~min, worst case 15~min & \cite{avelar2016battery,brown2019liion} \\
UPS, diesel rotary & Same, without batteries & 10--30~s & \cite{heslin2017myths} \\
On-site fuel, Tier I--IV minimum & Runs the generators at design load & 12~h & \cite{heslin2014fuel} \\
On-site fuel, EN~50600-2-2 & Same & Not stated in hours & \cite{tuvnord2024en50600} \\
On-site fuel, operator practice & Same & Up to 48~h at one hyperscale operator & \cite{microsoft2022fuel} \\
Refueling contract & Refills the tanks during the outage & Beyond on-site fuel, if deliveries arrive & \cite{davis2023grid} \\
\bottomrule
\end{tabularx}
\end{table}

\subsection{Failure Orderings and Documented Grid Events}
\label{app:taxonomy}

Table~\ref{tab:bg:taxonomy} pairs each rule by which the failure analysis selects the first substations to fail (\S\ref{section:grid_results}) with grid events that official investigations document.
Because we model which substations fail first and not why they fail, the pairing follows these first failures rather than the physical cause of an event.
Random failures stand for events that do not depend on the role of a substation in the grid, such as lightning and cold weather.
Most-shared removal (\S\ref{sec:results:bipartite}) stands for the physical attacks that disable a few substations directly, as at Metcalf and in Ukraine.
Seeds ranked by grid betweenness, and corridor seeds, stand for the loss of the corridors that hold a synchronous area together, as in the splits of 2003 and 2006 (\S\ref{sec:results:cascade}).
The loss of every substation of a country, as we seed the Iberian blackout (\S\ref{sec:results:duration}), stands for an event that de-energizes a whole national grid.
\begin{table}[!t]
\centering
\footnotesize
\caption{Rules by which the failure analysis selects the first substations to fail, each with documented grid events that match it. Scale and duration are those of the cited investigation report.}
\label{tab:bg:taxonomy}
\setlength{\tabcolsep}{3pt}
\begin{tabularx}{\linewidth}{@{}>{\raggedright\arraybackslash}p{1.55cm} >{\raggedright\arraybackslash}p{1.9cm} >{\raggedright\arraybackslash}X >{\raggedright\arraybackslash}p{3.1cm} >{\raggedright\arraybackslash}p{2.6cm} l@{}}
\toprule
Selection & Event & Initiating cause & Scale & Duration & Source \\
\midrule
Random & Great Britain, 9 Aug 2019 & Lightning strike on a line, then loss of two generators & 1.15~M customers, 892~MW & 15--45~min & \cite{ofgem2020august,eso2019technical} \\
Random & Texas, Feb 2021 & Generators fail from freezing and loss of fuel supply & 20{,}000~MW peak load shed, $>$4.5~M people & Load shed for nearly 3~days, some customers 4~days & \cite{ferc2021uri} \\
\addlinespace
Most-shared & Metcalf, 16 Apr 2013 & Rifle attack on one transmission substation & 17 transformers damaged, no customer lost power & Substation out for about one month & \cite{brinkman2015physical} \\
Most-shared & Ukraine, 23 Dec 2015 & Remote opening of breakers at three distribution companies & $\geq$27 substations, 225{,}000 customers & Several hours & \cite{lee2016ukraine} \\
Most-shared & Ukraine, from Oct 2022 & Missile and drone attacks on substations and generation & About half of the large network substations damaged in 2022--23 & Supply of a few hours per day in the worst-affected regions & \cite{iea2024ukraine} \\
\addlinespace
Grid betweenness, corridors & UCTE, 4 Nov 2006 & Switch-off of a 380~kV double circuit, then cascading line trips & CESA split in three, 15~M households, 17{,}000~MW shed & Resynchronized in 38~min, normal in $<$2~h & \cite{ucte2007split} \\
Grid betweenness, corridors & Italy, 28 Sep 2003 & Trips of the Lukmanier and San Bernardino 380~kV lines & Italy except Sardinia, 27{,}444~MW load, 177~GWh not supplied & Last customer after 18~h 12~min & \cite{ucte2004italy} \\
\addlinespace
Whole country & Iberia, 28 Apr 2025 & Overvoltage, then cascading loss of generation & Spain and Portugal, 31{,}538~MW, 100\% of demand & Grid restored after 12~h (PT) and 16~h (ES) & \cite{entsoe2026iberian} \\
\bottomrule
\end{tabularx}
\end{table}

\section{The Map in Detail}
\label{app:part_map}

\subsection{Extraction Steps in Detail}
\label{app:extraction_steps}
\label{app:extraction_funnel}

Five steps reduce the OSM objects of Europe to 24{,}900 substations, 20{,}085 of them in the scope.

\smallskip\noindent\textbf{Step 1. Data acquisition.} We ingest every \code{power=substation} object from OSM across Europe and every \code{power=line} and \code{power=cable} polyline tagged at $\geq$110~kV.
Because substations appear as points, polygons, or multi-polygon relations, we replace each polygon with a representative interior point.
We convert the voltage tags to kilovolts, and a multi-value tag such as \code{380000;150000} takes its maximum, here 380~kV.

\smallskip\noindent\textbf{Step 2. Subtype exclusion.} We exclude the eight subtypes that the OSM tagging convention defines as non-transmission (\code{minor\_distribution}, \code{kiosk}, \code{traction}, \code{production}, \code{measurement}, \code{compensation}, \code{valve}, \code{historic})~\cite{osmwiki_substation}.
We keep three further subtypes (\code{industrial}, \code{generation}, \code{converter}) only at $\geq$220~kV, where they are transmission assets such as HVDC converter stations.

\smallskip\noindent\textbf{Step 3. Quality gate.} A candidate tagged at $\geq$110~kV skips to Step~5.
A named candidate without a voltage tag goes to Step~4 unless its subtype is \code{distribution}, and the gate discards every other candidate.
Of the 692{,}387 candidates that remain after Step~2, the gate keeps 23{,}736 tagged and 15{,}748 named candidates.
The gate discards 636{,}053 unnamed objects, 15{,}932 named objects tagged below 110~kV, and 918 named \code{distribution} objects without a voltage tag.

\smallskip\noindent\textbf{Step 4. Proximity validation.} Because a name alone does not show that a station belongs to the transmission grid, we keep a named candidate only if it lies within 1{,}000~m of a line at $\geq$220~kV or within 500~m of a line at 110--220~kV.
This test keeps 3{,}928 of the 15{,}748 named candidates.

\smallskip\noindent\textbf{Step 5. Deduplication and voltage assignment.} Because OSM often maps one substation as several features, we merge features that lie close together with DBSCAN at $\epsilon$=300~m and \code{min\_samples}=1.
Each station lies at the centroid of its cluster.
A larger radius would absorb yards whose features lie farther apart, but it also merges distinct stations.
Against TenneT, recall at a 500~m match falls from 96.4\% at 300~m to 77.8\% at 2{,}000~m (Figure~\ref{fig:validation_combined}b), and we therefore keep 300~m, the smallest radius in the sweep.
Because we compute no power flows, each station takes the highest voltage among its features, the membership rule of GridKit at 220~kV~\cite{gridkit2017}.
Stations from Step~4 are capped at 110~kV, because proximity to a line does not establish their voltage.

\begin{figure}[!t]
\centering
\includegraphics[width=\linewidth]{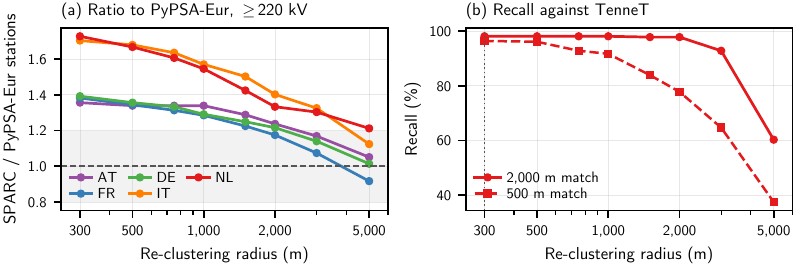}

\caption{Re-clustering the 300~m output of Step~5 at larger radii, with (a) the ratio of \tool to PyPSA-Eur stations at $\geq$220~kV in the five overcounting countries (grey band, 0.8--1.2) and (b) recall against TenneT at two match radii. The overcount shrinks smoothly, as bus-section duplication predicts, but recall falls once distinct stations merge, which is why Step~5 keeps 300~m (dotted).}
\label{fig:validation_combined}
\end{figure}

\subsection{Validation in Detail}
\label{app:validation_details}

\smallskip\noindent\textbf{Match radius and chance baselines.}
To test whether the TenneT match depends on the 2~km radius or on the density of the \tool stations, we repeat the match at four radii and against two chance baselines (Table~\ref{tab:validation_nulls}).
The first baseline shifts every \tool station by 5~km in one of eight directions and averages the result over the eight.
The second replaces the \tool stations with 421 points drawn uniformly at random inside their convex hull in the Netherlands.
Across the radii, recall varies by only 2.5 percentage points, whereas both baselines stay at or below 15.0\%, which shows that the match reflects where \tool places the stations.
Precision rises with the radius, from 72.9\% at 250~m to 86.2\% at 2~km.
At 2~km, recall is at least 99.5\% in each of the four voltage tiers from 110 to 380~kV.

\smallskip\noindent\textbf{Stations without a TenneT match.}
Because \tool resolves one station per yard and TenneT lists a site with several yards once, the match lets several \tool stations recover the same TenneT station.
Of the 58 \tool stations that match no TenneT station, 40 are points where a TenneT line passes from cable to overhead, which OSM tags as \code{substation=transition}, and 14 are named stations at $\geq$110~kV that the register omits, such as wind park connections.
Dropping the transition points that form their own cluster would raise precision to 94.0\% at unchanged recall.
We report this rule as a correction for future releases and do not apply it, because it would change the 24{,}900 substations and every number that depends on them.

\begin{table}[!htbp]
\centering
\small
\caption{TenneT station match by radius, for \tool and two chance baselines. Values are percentages of the 360 TenneT stations (recall) and of the matched point set (precision).}
\label{tab:validation_nulls}
\begin{tabular}{lrrrr}
\toprule
 & \multicolumn{2}{c}{\tool} & \multicolumn{2}{c}{Chance baseline, recall} \\
\cmidrule(lr){2-3}\cmidrule(lr){4-5}
Radius & Recall & Precision & Shifted 5~km & Random points \\
\midrule
250~m & 95.6 & 72.9 & 0.2 & 0.0 \\
500~m & 96.4 & 75.3 & 1.6 & 0.3 \\
1~km & 97.2 & 79.8 & 4.6 & 2.2 \\
2~km & 98.1 & 86.2 & 15.0 & 10.6 \\
\bottomrule
\end{tabular}
\end{table}

\smallskip\noindent\textbf{Dutch facilities in the feeder test.}
Three Dutch facilities share postcode 1098\,XH at Amsterdam Science Park, which the capacity map assigns to a station 6~km away.
The 150~kV Watergraafsmeer station, however, stands inside this postcode~\cite{arcam_sciencepark} and feeds the five other 1098\,X postcodes.
The feeder test of \S\ref{sec:method:validation} therefore excludes the three facilities.
If the test kept them with the station that the map names, their eleventh-nearest substation, the candidate set with $k=3$ would contain the feeder for 58 of 64 facilities (90.6\%).

\smallskip\noindent\textbf{PyPSA-Eur station counts.}
PyPSA-Eur covers every scope country except Liechtenstein, and over these 33 countries it lists 3{,}324 stations at $\geq$220~kV against 3{,}876 for \tool.
We measure agreement with Pearson $r$ on the counts and with Spearman $\rho$ on the country ranks, which the few large countries dominate less.
The counts agree with $r=0.983$ and $\rho=0.979$, and the median ratio of \tool to PyPSA-Eur is 0.97, with 20 of the 33 countries within 0.8--1.2.
The five countries with the most facilities, which hold 287 of the 552, all exceed the PyPSA-Eur count, from a ratio of 1.13 in Poland to 1.73 in the Netherlands, whereas Ukraine, where OSM coverage is thinnest, has the lowest ratio of 0.52.
The overcount is largest in Austria, France, Germany, Italy, and the Netherlands, where \tool extracts 1.36--1.73$\times$ the PyPSA-Eur count.
In bus-section duplication, OSM maps the bus sections of one substation as separate objects.
We attribute the excess to this duplication, because the overcount shrinks smoothly in each of these countries as the DBSCAN radius grows (Figure~\ref{fig:validation_combined}a).
Because a larger radius also merges distinct stations (Step~5), we keep 300~m and accept the overcount, which adds duplicate points at real yards rather than false stations.
Even at 5~km, where recall against TenneT has fallen to 60.3\%, the ratio stays above 1 in four of the five countries, which points to stations missing from PyPSA-Eur.
Conversely, 18 of the 33 Dutch PyPSA-Eur substations (54.5\%) lie within 2~km of a TenneT substation, at a median distance of 1.7~km.

\smallskip\noindent\textbf{France against RTE.}
Of the 1{,}066 RTE substations, 167 match no \tool station.
The extraction removes 98 of them by design, because they are traction (59), generation (21), distribution (11), or industrial (4) sites or lie below the voltage threshold (3).
Another 38 carry codes absent from OSM, although spot checks find some of them under variant codes, such as \code{COR.P} for \code{CORD5}.
The remaining 31 are lost when Step~5 merges them with a neighbouring station or when the clipping to the scope removes them.

\label{app:validation_breadth}

\subsection{Substations Nearest to the Most IXP Presence}
\label{app:hotspots}

To find where IXP presence concentrates, we rank substations by the IXP presence for which they are the nearest high-voltage source.
The top 44 of them are the nearest source for half of all 1{,}394 IXP presences.
The loss of this nearest source alone does not darken a facility, because a facility goes dark only when all three of its candidates fail (\S\ref{section:grid_results}).

\section{PEERING Experiments in Detail}
\label{app:part_peering}

\subsection{PEERING Deployment Details}\label{app:peering}

Each announcement configuration stays active for 90~minutes, which avoids route flap dampening and leaves enough time to measure catchments and routes.
To measure the latency from a site, we announce a prefix from that site alone and send three pings from our router to each of the 404{,}957 monitored addresses.
Continuous pings between our router and the site give the baseline latency, which we subtract from the end-to-end latency to each address.
From the RIPE Atlas traceroutes, we keep a route only when the last three traceroutes from a probe to a prefix during an announcement traverse the same sequence of ASes.

\subsection{Additional Discussion of Outage Experiments}\label{app:peering.results}

Grouping the monitored addresses by AS (Figure~\ref{fig:app.peering.weighting}a) or weighting each AS by its estimated traffic volume (Figure~\ref{fig:app.peering.weighting}b) gives reroute distances similar to those of individual addresses (Figure~\ref{fig:peering.isolated}a).
We estimate the traffic volume of an AS as proportional to its population in the APNIC AS-population dataset~\cite{apnicaspop}.
Over all sites, 44\% of the rerouted addresses, 44\% of the ASes, and 41\% of the estimated volume move less than 1{,}000~km from the failed site.

\begin{figure}[!t]
\centering
\includegraphics[width=\linewidth]{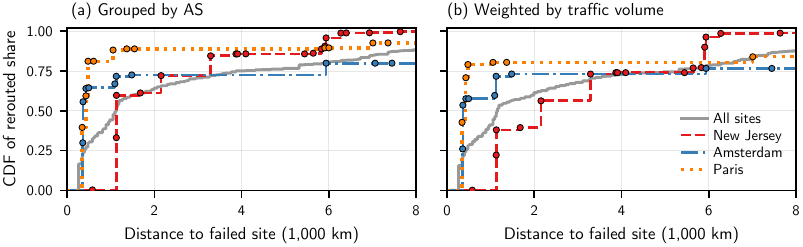}
\caption{Reroute distance after isolated outages, with the monitored addresses (a) grouped by AS and (b) weighted by the traffic volume of their AS, estimated from the APNIC AS-population dataset~\cite{apnicaspop}. Both distributions resemble the distribution over individual addresses in Figure~\ref{fig:peering.isolated}a.}
\label{fig:app.peering.weighting}
\end{figure}

For the isolated outage of each Vultr site, we report the share of the monitored addresses in its catchment that move to each of its four nearest alternate sites (Table~\ref{tab:app.peering.results}).
Over the 29 outages, these four sites receive a median of 67.2\% of the addresses, and at least half of them in 23 outages.

In 12 of the 29 outages, one of the three farther sites receives more of these addresses than the nearest site, as in the outage of New Jersey.
Toronto, the nearest alternate of New Jersey at ${\sim}$550~km, receives 0.5\% of its addresses, whereas Atlanta and Chicago, ${\sim}$1{,}150~km away, receive ${\sim}$27\% each.
Some addresses also cross an ocean, as 12.8\% of those that New Jersey serves move to Amsterdam, Frankfurt, and Madrid, ${\sim}$5{,}900~km away.
These addresses often belong to well-connected service and hosting providers, such as GoDaddy and Fly.io, and to ISPs that are customers of large Tier-1 networks, such as Neterra (AS34224) and Skyway West (AS14007).
Fiberway (AS49434) takes a regional-transit detour like those through EdgeUno in \S\ref{section:routing_results}, because it exchanges traffic with PEERING directly in Paris but reaches the Mumbai site through Airtel when Paris fails.
In the cascades from Amsterdam, more than a quarter of the catchment sees lower latency than at Amsterdam for the first three failures (Figure~\ref{fig:peering.cascading}b).
Of the catchment of Amsterdam, 54\% lies more than 2{,}000~km away, in regions such as the Middle East, India, Russia, and Turkey, and these addresses may reach Amsterdam through remote peering at AMS-IX and move to a closer site once it fails.
European sites also serve distant addresses, as 32\% of the addresses that prefer Frankfurt (DE-CIX) and 27\% of those that prefer London (LINX) lie more than 2{,}000~km away.

\begin{table}[!t]
    \centering
\resizebox{\textwidth}{!}{
\begin{tabular}{l|llllllll}
\toprule
\multirow{1}{*}{} & \multicolumn{2}{c}{Nearest \#1} & \multicolumn{2}{c}{Nearest \#2} & \multicolumn{2}{c}{Nearest \#3} & \multicolumn{2}{c}{Nearest \#4} \\
Outage Site & Site & \% IPs & Site & \% IPs & Site & \% IPs & Site & \% IPs \\
\midrule
Amsterdam & London & 27.0\% & Frankfurt & 21.8\% & Paris & 6.9\% & Manchester & 0.4\% \\
Atlanta & Chicago & 4.3\% & Miami & 10.2\% & New Jersey & 41.7\% & Dallas & 13.3\% \\
Chicago & Toronto & 13.7\% & Atlanta & 18.2\% & New Jersey & 37.9\% & Dallas & 8.5\% \\
Dallas & Atlanta & 60.6\% & Chicago & 12.4\% & Mexico & 5.5\% & Miami & 3.3\% \\
Frankfurt & Amsterdam & 30.8\% & Paris & 16.3\% & London & 2.2\% & Manchester & 0.0\% \\
Honolulu & Silicon Valley & 13.5\% & Los Angeles & 0.7\% & Seattle & 32.6\% & Mexico & 15.6\% \\
Johannesburg & Tel Aviv & 0.0\% & São Paulo & 4.6\% & Madrid & 4.2\% & Singapore & 0.3\% \\
London & Manchester & 15.7\% & Paris & 11.1\% & Amsterdam & 17.5\% & Frankfurt & 9.0\% \\
Los Angeles & Silicon Valley & 12.5\% & Seattle & 32.0\% & Dallas & 18.1\% & Mexico & 13.7\% \\
Madrid & Paris & 17.2\% & London & 3.2\% & Frankfurt & 2.7\% & Manchester & 0.3\% \\
Manchester & London & 92.2\% & Amsterdam & 0.9\% & Paris & 0.0\% & Frankfurt & 0.0\% \\
Melbourne & Sydney & 94.9\% & Singapore & 2.8\% & Osaka & 0.0\% & Tokyo & 0.0\% \\
Mexico & Dallas & 2.3\% & Miami & 0.3\% & Atlanta & 1.5\% & Los Angeles & 1.2\% \\
Miami & Atlanta & 56.9\% & New Jersey & 6.5\% & Dallas & 4.1\% & Chicago & 0.5\% \\
New Jersey & Toronto & 0.5\% & Chicago & 27.6\% & Atlanta & 27.3\% & Miami & 1.2\% \\
Osaka & Tokyo & 96.9\% & Seoul & 0.6\% & Singapore & 0.0\% & Honolulu & 0.0\% \\
Paris & London & 31.4\% & Amsterdam & 24.0\% & Frankfurt & 24.4\% & Manchester & 0.0\% \\
Santiago & São Paulo & 98.1\% & Mexico & 0.4\% & Miami & 0.7\% & Atlanta & 0.1\% \\
São Paulo & Santiago & 54.0\% & Miami & 6.1\% & Johannesburg & 0.0\% & Mexico & 2.0\% \\
Seattle & Silicon Valley & 28.9\% & Los Angeles & 23.2\% & Dallas & 0.3\% & Chicago & 17.6\% \\
Seoul & Osaka & 22.7\% & Tokyo & 26.0\% & Singapore & 1.5\% & Honolulu & 0.0\% \\
Silicon Valley & Los Angeles & 17.5\% & Seattle & 1.4\% & Dallas & 2.6\% & Chicago & 0.1\% \\
Singapore & Seoul & 1.3\% & Osaka & 0.9\% & Tokyo & 29.9\% & Melbourne & 2.4\% \\
Stockholm & Warsaw & 1.1\% & Amsterdam & 22.5\% & Frankfurt & 68.4\% & Manchester & 0.0\% \\
Sydney & Melbourne & 55.4\% & Singapore & 9.9\% & Osaka & 0.0\% & Tokyo & 1.9\% \\
Tel Aviv & Warsaw & 0.0\% & Frankfurt & 57.7\% & Stockholm & 0.0\% & Paris & 5.5\% \\
Tokyo & Osaka & 44.2\% & Seoul & 11.1\% & Singapore & 6.4\% & Honolulu & 0.0\% \\
Toronto & New Jersey & 78.7\% & Chicago & 13.1\% & Atlanta & 0.4\% & Dallas & 0.1\% \\
Warsaw & Stockholm & 2.0\% & Frankfurt & 34.3\% & Amsterdam & 40.3\% & Paris & 0.0\% \\
\bottomrule
\end{tabular}
}
    \caption{Share of the monitored addresses of each failed site that move to each of its four nearest alternate sites in isolated outages, ordered by distance.}
    \label{tab:app.peering.results}
\end{table}

\section{Failure Analysis in Detail}
\label{app:part_failure}

\subsection{Outage Rule and Direct Loss of Shared Substations}
\label{app:calibrated}

\smallskip\noindent\textbf{Outage rule.}
Because the substation that supplies a facility is private to the facility and its grid operator, the outage rule of \S\ref{section:grid_results} calls a facility dark only when all its candidates are de-energized.
When the candidate set contains the substation that actually supplies the facility, every facility that this rule calls dark has lost power.
Adding candidates can only lower the IXP presence at dark facilities, $\Phi$ in Eq.~\eqref{eq:ixp_offline}.

\begin{proposition}[Conservative outage rule]
\label{prop:conservative}
Let $\phi(f)$ be the true feeder of facility $f$.
If $\phi(f) \in N(f)$, then $f$ is dark under the outage rule only for sets $X$ that contain $\phi(f)$.
If two candidate sets satisfy $N(f) \subseteq N'(f)$ for every $f$, then $\Phi'(X) \le \Phi(X)$ for every $X$.
\end{proposition}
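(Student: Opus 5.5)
Both claims follow directly from the outage rule, under which facility $f$ is dark for a set $X$ exactly when $N(f) \subseteq X$, and from the definition of $\Phi$ in Eq.~\eqref{eq:ixp_offline}. I will prove each claim by unfolding these definitions. No earlier result is needed beyond them.

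For the first claim, fix $f$ with $\phi(f) \in N(f)$ and suppose $f$ is dark for $X$, so $N(f) \subseteq X$. Since $\phi(f) \in N(f)$, it follows at once that $\phi(f) \in X$. Because $X$ is the set of de-energized substations, the true feeder of $f$ is de-energized, so $f$ has truly lost power. This gives the reading stated before the proposition: every dark facility is truly without power, and $\Phi$ can understate an outage but not overstate it.

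For the second claim, I will compare the indicators facility by facility. Fix $X$ and a facility $f$ with $N(f) \subseteq N'(f)$. If $\mathbf{1}[N'(f) \subseteq X] = 1$, then $N(f) \subseteq N'(f) \subseteq X$, so $\mathbf{1}[N(f) \subseteq X] = 1$. Hence $\mathbf{1}[N'(f) \subseteq X] \le \mathbf{1}[N(f) \subseteq X]$ for every $f$.

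Multiplying by the nonnegative weights $w_f$ and summing over $F$ bounds the numerator of $\Phi'(X)$ by the numerator of $\Phi(X)$. The denominator $\sum_{f\in F} w_f$ is the same for both, because enlarging candidate sets changes neither the facilities nor their IXP presences. This gives $\Phi'(X) \le \Phi(X)$.

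The only point needing care is this shared denominator. I will state explicitly that $F$ and the weights $w_f$ are held fixed, and that $\sum_f w_f > 0$ (the map has 1{,}392 presences), so that $\Phi$ is well defined. I will also note that no assumption on $X$ is required, so the inequality holds for attacks and for both island rules alike.
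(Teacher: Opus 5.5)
Your proof is correct and follows the paper's argument essentially step for step: you get the first claim by combining $N(f)\subseteq X$ with $\phi(f)\in N(f)$, and the second by showing that every facility dark under $N'$ is also dark under $N$, then summing with the same weights over the same denominator. Your explicit remarks on the nonnegative weights and the fixed, positive denominator spell out what the paper leaves implicit in ``because the weights are the same.''
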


\begin{proof}
If $f$ is dark, then $N(f) \subseteq X$, and $\phi(f) \in N(f)$ gives $\phi(f) \in X$.
For the second statement, $N'(f) \subseteq X$ implies $N(f) \subseteq X$ when $N(f) \subseteq N'(f)$.
Every facility that is dark under $N'$ is therefore dark under $N$, and because the weights are the same, $\Phi'(X) \le \Phi(X)$.
\end{proof}

\smallskip\noindent\textbf{Ties in the most-shared order.}
The most-shared order breaks ties by the IXP presence served and then by OSM identifier.
The tie rule sets much of the order, because at $k=3$, 661 of the 984 facility-connected substations are candidates of one facility and 193 of two.
The threshold of 159 substations falls among those with two facilities, whose order the tie rule therefore decides.
Breaking ties at random instead moves the threshold to 21.3--27.9\% in Europe and 11.2--17.8\% in the Netherlands over 200 random orders.
The order with ties by IXP presence, at 16.2\% and 11.2\%, lies at or below the low end of both ranges.
To cut power to a quarter and three quarters of European IXP presence, this order removes 58 and 402 substations, against 618 and 892 for random removal.

\smallskip\noindent\textbf{Shuffled and rewired maps.}
To test whether the concentration comes from where facilities and their IXPs sit, we repeat most-shared removal on two altered maps of 1{,}000 draws each (Table~\ref{tab:nulls_concentration}).
The \emph{shuffled map} keeps the bipartite dependency graph and shuffles the IXP presences over the 550 facilities.
The \emph{rewired map} keeps the weights and applies 10 double-edge swaps per edge, which preserve the degree of every facility and every substation but not their geography.
This map therefore tests whether the degree sequence alone produces the concentration.
The random threshold is the same 79.4\% on the measured map and on both altered maps, because it depends only on the facility degrees and weights.
The 79.3\% of \S\ref{sec:results:bipartite} estimates this threshold from 1{,}000 random orders.
Each of the 2{,}000 draws needs more substations than the 159 of the measured map, because on the measured map the facilities that host the most IXPs share their candidate substations.

\begin{table}[!t]
\centering
\small
\caption{Concentration of the measured $k=3$ map against the shuffled and rewired maps of 1{,}000 draws each. Thresholds are the number of the 984 facility-connected substations removed to take 50\% of IXP presence offline. The columns of the altered maps give the mean and the 5th--95th percentile. Every draw is less concentrated than the measured map in every row ($p<0.001$).}
\label{tab:nulls_concentration}
\begin{tabular}{@{}lrrr@{}}
\toprule
Statistic & Measured & Shuffled & Rewired \\
\midrule
Most-shared threshold, substations & 159 & 331 (324--340) & 387 (377--397) \\
Most-shared threshold, \% of 984 & 16.2 & 33.6 (32.9--34.6) & 39.3 (38.3--40.3) \\
Ratio of random to most-shared threshold & 4.9 & 2.4 (2.3--2.4) & 2.0 (2.0--2.1) \\
Nearest source for half of IXP presence, substations & 44 & 60 (56--64) & -- \\
\bottomrule
\end{tabular}
\end{table}

\smallskip\noindent\textbf{Calibrated single-feeder rule.}
Because a facility goes dark under the outage rule only when all three of its candidates fail, the rule understates the loss whenever the actual feeder fails alone (Proposition~\ref{prop:conservative}).
To bound this effect, we give each facility one feeder, drawn among its five nearest substations within 15~km with the rank probabilities of the Dutch feeder test (39, 11, 8, 1, and 2 of 61 facilities for ranks 1 to 5).
For the 27 facilities with fewer than five candidates, we renormalize the probabilities over the available ranks.
We compare the outage rule with three candidates against the outage rule with one candidate and against 200 draws of the calibrated rule (Figure~\ref{fig:calibrated_feeder}).
In every setting, the least IXP presence goes offline with three candidates.
When the 984 substations are removed in the most-shared order of the three-candidate graph, half of IXP presence loses power after 12.5\% of them under the calibrated rule.
Over the draws, this share ranges from 10.8\% to 14.0\%, against 16.2\% under the outage rule.
Random removal needs 51.3\%, 4.1 times as many substations as most-shared removal, against 4.9 times under the outage rule.
The draws are independent across facilities, although co-located facilities probably share a feeder.
The percentile ranges therefore show how the draws vary, not the uncertainty of rank probabilities estimated from 61 facilities.

\label{app:national}

\begin{figure}[!t]
\centering
\begin{minipage}[c]{0.5\linewidth}
\centering
\includegraphics[width=\linewidth]{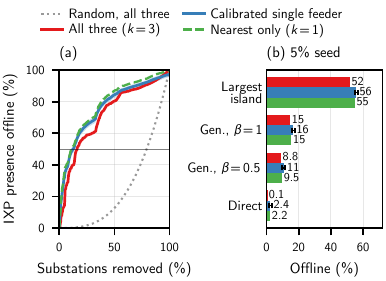}
\end{minipage}\hfill
\begin{minipage}[c]{0.47\linewidth}
\caption{IXP presence offline with all three candidates ($k=3$), with the nearest only ($k=1$), or with one calibrated feeder per facility, (a) under most-shared removal of the 984 facility-connected substations and (b) at a 5\% betweenness seed under three island rules and the direct model, with 5th--95th percentiles over 200 draws. In every setting, the least IXP presence goes offline with all three candidates.}
\label{fig:calibrated_feeder}
\end{minipage}
\vspace{-8pt}
\end{figure}

\begin{figure}[!t]
\centering
\includegraphics[width=0.9\linewidth]{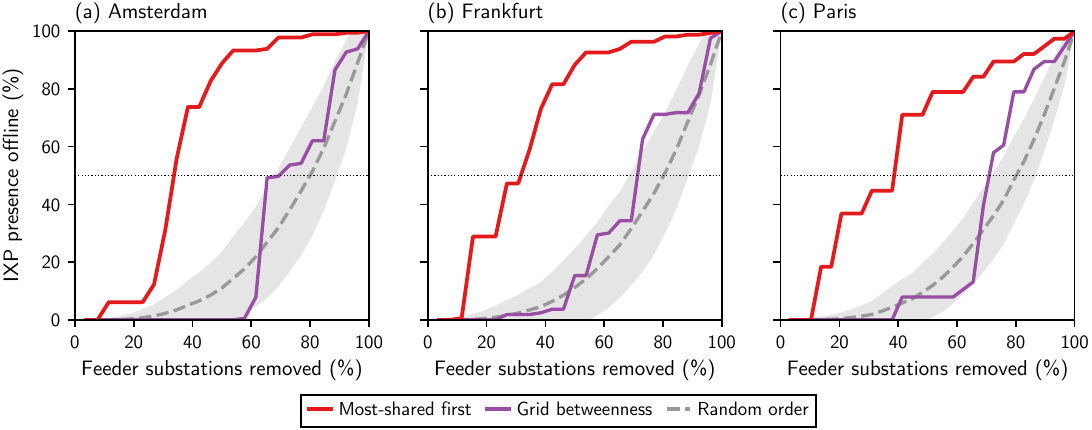}
\caption{IXP presence offline in Amsterdam, Frankfurt, and Paris as each metro's feeder substations are removed. Removing the most-shared substations first takes half of the metro's IXP presence offline after 35--41\% of them, whereas removal by grid betweenness tracks random order (band, $\pm 1$ standard deviation).}
\label{fig:app_robustness_cascade_metros}
\end{figure}

\smallskip\noindent\textbf{Per-metro view.}
To test whether the concentration holds within a metro, we repeat most-shared removal in Amsterdam, Frankfurt, and Paris, the three largest peering metros by facility count (Figure~\ref{fig:app_robustness_cascade_metros}).
In each metro, we remove only the candidate substations of its facilities.
We compare the most-shared order with removal by grid betweenness and with random removal.
In Frankfurt, removing 9 of its 26 candidate substations (35\%) in the most-shared order takes half of the metro's IXP presence offline, whereas random removal needs 81\%.
In Amsterdam, the two orders need 35\% and 81\% as well.
Removal by grid betweenness cuts power no faster than random removal in any of the three metros.
In Frankfurt, this removal has taken 1.8\% of the IXP presence offline when the most-shared order reaches half.
Within these metros, the candidates that many facilities share thus differ from those most central to the transmission grid.

\subsection{Spread Through Grid Islands}
\label{app:cascade_variants}

\smallskip\noindent\textbf{Composition of the backbone component.}
\label{app:backbone}
The largest connected component $C^\ast$ of the transmission graph, our proxy for the continental synchronous backbone, reaches beyond the scope.
It includes 1{,}528 substations in Great Britain, because OSM maps the HVDC interconnectors as power lines, and 1{,}434 in Belarus and Russia, through the lines to the Baltic states and Ukraine.
Today, a cascade stops at both links.
An HVDC link transfers power without passing on the frequency disturbances through which a cascade spreads.
The lines to Belarus and Russia have carried no power since Ukraine and the Baltic states disconnected from that grid in 2022 and 2025.
Another 33 substations of $C^\ast$ lie in Kosovo, Morocco, and Ireland.
Inside the scope, $C^\ast$ misses Sweden and Finland entirely and nearly all of Norway, Denmark, Latvia, and Bulgaria.
Sweden, Finland, Norway, and eastern Denmark form the Nordic synchronous area, which joins CESA only through HVDC links (\S\ref{sec:bg:grid}).
Western Denmark, Latvia, and Bulgaria, in contrast, are missing because of gaps in the extracted lines.
With 1{,}555 stations that no extracted line reaches and the stations in T\"urkiye and North Africa, these areas account for the 5{,}331 substations outside $C^\ast$.
As a result, 79 of the 552 facilities have all their candidates outside $C^\ast$.
To test whether the regions beyond the scope change the result, we rebuild the backbone in two ways.
Removing every link between synchronous areas leaves a component of 18{,}010 substations.
Restricting the graph to the scope before taking the largest component leaves 16{,}248 substations and 27{,}361 corridors.
At a 5\% seed, 52.5\% and 50.6\% of IXP presence go offline on these backbones, against 51.7\% on $C^\ast$ (Table~\ref{tab:sensitivity_full}).
We therefore keep $C^\ast$ as extracted, including the regions beyond the scope.

\begin{table}[!t]
\centering
\small
\caption{IXP presence offline at a 5\% betweenness seed under each model parameter. The baseline is the largest-island rule with $k=3$, a 2.0~km corridor buffer, and an event that outlasts every backup. Each row changes one parameter and keeps the others at the baseline. $^\ast$In parentheses, the value when candidates outside the largest component $C^\ast$ are ignored.}
\label{tab:sensitivity_full}
\begin{tabular}{@{}llr@{}}
\toprule
Parameter & Setting & IXP presence offline (\%) \\
\midrule
Island rule & Direct model, no propagation & 0.1 \\
 & Generation-aware, $\beta$ = 0.5 / 0.75 / 1 & 8.8 / 11.9 / 14.6 \\
 & Largest island (baseline) & 51.7 \\
 & Load redistribution, then largest island & 64.7 \\
\addlinespace
Event duration $T$ & $T \le 12$~h, 12~h of fuel at every facility & 0.0 \\
 & $T = 16$~h, 12~h of fuel at every facility & 51.7 \\
 & $T = 16$~h, a quarter of the facilities hold 48~h & 38.6 \\
\addlinespace
Seed type & 5\% of corridors by edge betweenness & 46.1 \\
Candidate set size & $k$ = 1 to 5 & 49.9--54.7 \\
 & All within 15~km, median 22 & 25.6 (59.8)$^\ast$ \\
Feeder rule & Calibrated single feeder (\S\ref{sec:model:system}) & 55.8 (54.7--56.7) \\
Candidate selection, $k=3$ & Drop nearest, random backup, tiered & 50.0--55.6 \\
Corridor buffer & 0.6 to 2.0~km & 45.9--52.6 \\
Links between synchronous areas & Removed & 52.5 \\
Graph restricted to the scope & Largest component, 16{,}248 substations & 50.6 \\
Island rules in every component & All substations of the graph, same seed size & 51.7 \\
\bottomrule
\end{tabular}
\end{table}

\smallskip\noindent\textbf{Ceiling.}
An island rule de-energizes only substations of $C^\ast$.
Because a facility goes dark only when all its candidate substations are de-energized, a cascade can reach only the facilities whose candidates all lie in $C^\ast$.
For the measured map, these are the 382 facilities with 908 of the 1{,}392 IXP presences, which caps the loss of any seed at 65.2\%.

\begin{proposition}[Ceiling]
\label{prop:ceiling}
For every seed $S_0 \subseteq C^\ast$ and every island rule, $\Phi(X) \le \dfrac{\sum_{f : N(f) \subseteq C^\ast} w_f}{\sum_f w_f}$.
\end{proposition}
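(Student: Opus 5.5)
The plan is to show that every set $X$ produced by a seed and an island rule lies inside the backbone, $X \subseteq C^\ast$, and then to compare the indicator sums in the numerator of $\Phi$ term by term.

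The first step treats the two rules of \S\ref{sec:results:cascade} separately. Under the largest-island rule, Eq.~\eqref{eq:two_stage} gives $X_{\mathrm{isl}} = S_0 \cup \bigcup_{j\ge 2} C_j$. By hypothesis $S_0 \subseteq C^\ast$. The islands $C_j$ are the connected components of $C^\ast \setminus S_0$, so each $C_j \subseteq C^\ast$. Under the generation-aware rule, Eq.~\eqref{eq:island_gen} is again a union of $S_0$ with a subfamily of the same islands, so $X_{\mathrm{gen}}(\beta) \subseteq C^\ast$ for every $\beta$.

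The statement covers ``every island rule'', which is more than the two rules written out. I would therefore phrase the argument for any rule that de-energizes the seed together with some subset of the islands of $C^\ast \setminus S_0$. This property is exactly what the paper asserts just before the proposition: an island rule de-energizes only substations of $C^\ast$. I expect the only real obstacle to be pinning down what ``island rule'' means. I would take this structural property as the definition and note that any variant that stays within the backbone (such as load redistribution before islanding) satisfies it as well.

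The second step is the comparison. Suppose $\mathbf{1}[N(f) \subseteq X] = 1$. Since $X \subseteq C^\ast$, it follows that $N(f) \subseteq C^\ast$. Hence $\mathbf{1}[N(f)\subseteq X] \le \mathbf{1}[N(f)\subseteq C^\ast]$ for every facility $f$. Multiplying by $w_f \ge 0$, summing over $F$, and dividing by the unchanged denominator $\sum_f w_f$ gives the bound. This is the same monotonicity argument as in Proposition~\ref{prop:conservative}, with $X$ enlarged to $C^\ast$. Finally, evaluating the right-hand side on the measured map recovers $908/1{,}392 \approx 65.2\%$.
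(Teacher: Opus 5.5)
Your proposal is correct and follows essentially the same route as the paper. The paper also argues that the seed and every island an island rule de-energizes lie in $C^\ast$, so $X \subseteq C^\ast$, and that a dark facility therefore has $N(f) \subseteq X \subseteq C^\ast$, which bounds the numerator of $\Phi$; your decision to treat ``de-energizes only substations of $C^\ast$'' as the defining property of an island rule matches the assumption the paper states just before the proposition.
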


\begin{proof}
Because the seed and every island that an island rule de-energizes are subsets of $C^\ast$, the set $X$ of de-energized substations is a subset of $C^\ast$ as well.
A facility is dark only if $N(f) \subseteq X \subseteq C^\ast$.
The numerator of Eq.~\eqref{eq:ixp_offline} is therefore at most the weight of the facilities with $N(f) \subseteq C^\ast$.
\end{proof}

\smallskip\noindent\textbf{Seeds.}
We sweep the seed fraction $q$ from 0 to 40\% in 80 steps along one ranking, so that each seed contains all smaller seeds.
We approximate betweenness from 500 pivot substations.
The random baseline averages 40 orders.
Both island rules are reported in Table~\ref{tab:island_sweep} at the seed fractions of 1, 2, 5, and 10\%, which \name also uses.
Because the splits of 2003 and 2006 began with line trips (Table~\ref{tab:bg:taxonomy}), we also seed with transmission corridors ranked by edge betweenness, which fail without any substation failing directly (Table~\ref{tab:sensitivity_full}).

\begin{table}[!t]
\centering
\caption{IXP presence offline (\%) under the two island rules, with seeds ranked by grid betweenness (substations) or edge betweenness (corridors). The last two columns give the seed fraction at which half of IXP presence is offline, with a dash where it is not reached, and the maximum over seeds up to 40\%. Corridor rows use the monotone envelope at the nearest sweep point, and the last row counts the islands that the generation-aware rule with $\beta=1$ keeps energized.}
\label{tab:island_sweep}
\small
\begin{tabular}{llrrrrrr}
\toprule
 & & \multicolumn{4}{c}{Seed fraction} & & \\
\cmidrule(lr){3-6}
Rule & Seed & 1\% & 2\% & 5\% & 10\% & 50\% at & Max \\
\midrule
Largest island & Substations & 16.9 & 28.4 & 51.7 & 62.4 & 4.6\% & 65.2 \\
Largest island & Corridors & 4.0 & 40.4 & 46.1 & 54.7 & 8.6\% & 63.9 \\
Generation-aware, $\beta=0.5$ & Substations & 0.1 & 1.4 & 8.8 & 17.6 & -- & 45.3 \\
Generation-aware, $\beta=0.75$ & Substations & 0.1 & 1.4 & 11.9 & 19.3 & -- & 48.7 \\
Generation-aware, $\beta=1$ & Substations & 0.1 & 4.9 & 14.6 & 30.2 & 19.2\% & 50.9 \\
\midrule
Islands energized, $\beta=1$ & Substations & 29/117 & 67/241 & 151/646 & 276/1{,}326 & & \\
\bottomrule
\end{tabular}
\end{table}

\smallskip\noindent\textbf{Generation and load.}
The generation-aware rule needs the dispatchable capacity and the load of every island.
We take the power plants of powerplantmatching~\cite{gotzens2019ppm} that operate in 2026 with at least 50~MW of dispatchable capacity, which covers every fuel except wind and solar.
Each plant is assigned to its nearest substation within 5~km.
The load of a substation is the 2019 peak load of its country~\cite{opsd2020timeseries} divided by the number of substations in that country.
Because powerplantmatching lists no plants in Belarus, Russia, and Morocco, we set the load of their 1{,}442 substations to zero.
We evaluate the rule at $\beta = 0.5$, 0.75, and 1 (Table~\ref{tab:island_sweep}).

\smallskip\noindent\textbf{Historical check.}
To compare the island rules with the two documented splits of CESA, we cut the transmission graph along the three areas of 4 November 2006~\cite{ucte2007split} and along the Italian border of 28 September 2003~\cite{ucte2004italy}.
For each area, we compare the capacity and load that the rule assigns with the generation and load reported at the event (Table~\ref{tab:historical_check}).
With the capacity of 2026, every area has $G/L > 1$, and the rule with $\beta = 1$ therefore keeps all four energized.
This outcome matches the three islands of 2006 but misses the collapse of Italy in 2003.

\begin{table}[!t]
\centering
\caption{Historical check of the generation-aware rule. $G$ is the dispatchable capacity of 2026 assigned to an area and $L$ its load proxy, against the generation and load reported just before each split~\cite{ucte2007split,ucte2004italy}. The rule predicts survival when $G/L \geq \beta$. The areas of 2006 follow the operators listed in the final report, with the split inside Germany, Austria, and Croatia approximated by a line through the tripped corridors.}
\label{tab:historical_check}
\small
\begin{tabular}{lrrrrrrl}
\toprule
 & \multicolumn{3}{c}{Rule, capacity of 2026} & \multicolumn{3}{c}{Reported at the event} & \\
\cmidrule(lr){2-4}\cmidrule(lr){5-7}
Island & $G$ (GW) & $L$ (GW) & $G/L$ & Gen. (GW) & Load (GW) & Ratio & Outcome \\
\midrule
2006 West       & 342.4 & 266.2 & 1.29 & 182.7 & 191.6 & 0.95 & Survived, 17~GW shed \\
2006 North-East & 80.1  & 75.7  & 1.06 & 62.3  & 52.3  & 1.19 & Survived \\
2006 South-East & 33.4  & 24.9  & 1.34 & 29.1  & 29.9  & 0.97 & Survived \\
2003 Italy      & 67.7  & 51.9  & 1.30 & 20.5  & 27.4  & 0.75 & Collapsed \\
\bottomrule
\end{tabular}
\end{table}

\smallskip\noindent\textbf{Overload cascade.} The Italian blackout of 2003 and the split of 2006 spread through line overloads after the first trips~\cite{ucte2004italy,ucte2007split}.
We model this mechanism with the load-redistribution cascade of Motter and Lai~\cite{motter2002cascade}, in which the load $L_i(t)$ of substation $i$ is its betweenness in the graph that survives at step $t$, over unweighted shortest paths.
Each substation $i$ has the capacity
\begin{equation}
C_i = (1+\alpha)\,\max\bigl(L_i(0),\,\varepsilon\bigr),
\label{eq:capacity}
\end{equation}
where the tolerance $\alpha \geq 0$ sets the headroom, and $\varepsilon$, the smallest positive initial load, keeps a substation with zero initial load from tripping on any positive load.
The seed $S_0$ fails at step $t=0$, and at each later step every surviving substation with $L_i(t) > C_i$ trips at once.
The cascade stops when no substation trips, or after 50 steps, and we write $Y$ for the set of tripped substations.
Because we estimate the load from 500 source substations, we reuse the same sources at every step, so that no substation trips on sampling noise.
The \emph{overload-only} set $X_{\mathrm{ov}} = S_0 \cup Y$ keeps every surviving substation energized.
The \emph{overload with islanding} set also applies the largest-island rule of Eq.~\eqref{eq:two_stage} to $G \setminus (S_0 \cup Y)$.
We evaluate the tolerances $\alpha \in \{0.1, 0.25, 0.5, 1.0\}$.

At a 5\% seed, 64.7\% of IXP presence goes offline when the largest-island rule follows the overload cascade, close to the ceiling of 65.2\% (Proposition~\ref{prop:ceiling}).

\subsection{Duration and Sensitivity of the Results}
\label{app:sensitivity_feeder}

\smallskip\noindent\textbf{Event duration under backup mixes.}
\label{app:duration}
For each backup mix of Figure~\ref{fig:outage_duration_main}, we assign every facility at random to one of the three backup classes of \S\ref{sec:results:duration} in the proportions of the mix, and we repeat the assignment 200 times.
The lines of Figure~\ref{fig:outage_duration_main} give the mean over these assignments and the bands the 5th--95th percentiles.
For a 2-hour event with half of the facilities on UPS only, the band spans 22.3--29.5\% around a mean of 25.8\%.
When every facility holds UPS only or every facility holds 48~hours of fuel, the loss rises in a single step to the same 51.7\%, at 15~minutes or at 48~hours.

\smallskip\noindent\textbf{Feeder configurations.}
We repeat the most-shared removal of \S\ref{sec:results:bipartite} and the cascade of \S\ref{sec:results:cascade} for eight feeder configurations with at most five candidates.
Five of them take the $k=1$--5 nearest substations.
\emph{Drop nearest} keeps only the second- and third-nearest, and \emph{random backup} keeps the nearest and draws two more at random within 15~km.
\emph{Tiered} replaces the third candidate with the nearest substation at $\geq$220~kV.
A ninth configuration, \emph{all in range}, links a facility to every substation within 15~km, a median of 22.
Under each of the eight configurations, most-shared removal cuts power to half of IXP presence after 13--28\% of the facility-connected substations, against 50--87\% for random removal.
At a 5\% seed, the cascade takes 50--56\% of IXP presence offline under each of them.
Random removal thus needs 2.9--4.9 times as many substations as most-shared removal across the eight configurations.
Even when a facility is linked to every substation within 15~km, most-shared removal needs 39.4\% of the 3{,}874 connected substations against 97.4\% for random removal, 2.5 times as many.

\section{\name in Detail}
\label{app:shield_details}

\subsection{The Hardening Problem}
\label{app:shield_problem}

\label{app:proofs}

To choose what to harden, \name scores every candidate set by the IXP presence that an outage scenario takes offline with this set in place.
An event de-energizes the substations $X$ for $T$ hours.
\name backs up the substations $W$ or extends the backup of the facilities $H$, in both cases to $B_{\mathrm{hard}} = 48$~hours.
Both actions keep the map of \S\ref{section:methodology} fixed, with the same substations, lines, candidate feeders, and IXP facilities.
The substations that stay de-energized are then
\begin{equation}
X_{\mathrm{eff}} =
\begin{cases}
X \setminus W & \text{if } T \le B_{\mathrm{hard}},\\
X & \text{otherwise.}
\end{cases}
\label{eq:shield_eff}
\end{equation}
Facility $f$ is dark when $N(f) \subseteq X_{\mathrm{eff}}$ and offline when it is dark and $T > B_f$, with $B_f = B_{\mathrm{hard}}$ for $f \in H$.
With $\mathrm{Off}(X, W, H)$ the set of offline facilities, the loss is
\begin{equation}
\mathrm{Loss}(X, W, H) = \sum_{f \in \mathrm{Off}(X, W, H)} w_f \;+\; \lambda\, R\bigl(\mathrm{Off}(X, W, H)\bigr),
\label{eq:shield_loss}
\end{equation}
where $R$ is the stranded IXP presence of \S\ref{sec:shield:routing} and $\lambda \ge 0$ its weight.
Against the attacker of T1, the defender solves
\begin{equation}
\min_{|W| \le B_d} \;\; \max_{|X| \le B_a} \;\; \mathrm{Loss}(X, W, \varnothing),
\label{eq:shield_objective}
\end{equation}
or the analogue over $H$ for facility backup.
For T2 and T3, the defender minimizes the mean loss over the fixed scenarios $X_1,\dots,X_n$.

When every unhardened facility has the same backup, both kinds of hardening help only for events that outlast this backup but not the 48~hours of the hardened units.
The evaluation of \S\ref{sec:shield:results} therefore fixes the event at 16~hours.

\begin{proposition}[Duration window]
\label{prop:window}
Let $B_f = b$ for every facility that is not hardened, and let $B_{\mathrm{hard}} > b$.
The loss is zero for $T \le b$.
For $T > B_{\mathrm{hard}}$ the loss equals the loss without hardening.
Hardening lowers the loss only for $b < T \le B_{\mathrm{hard}}$.
\end{proposition}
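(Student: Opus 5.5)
The plan is to compare the set of offline facilities $\mathrm{Off}(X,W,H)$ directly across the three duration regimes, using Eqs.~\eqref{eq:shield_eff} and~\eqref{eq:shield_loss}. Because the loss in Eq.~\eqref{eq:shield_loss} depends on the hardening only through $\mathrm{Off}(X,W,H)$, it suffices to show two things. First, $\mathrm{Off}$ is empty for $T\le b$. Second, $\mathrm{Off}(X,W,H)=\mathrm{Off}(X,\varnothing,\varnothing)$ for $T>B_{\mathrm{hard}}$. The statement about the window then follows as the complement: outside $b<T\le B_{\mathrm{hard}}$ the hardened and unhardened losses coincide, so any reduction can only occur inside it. I need one property of the stranded term, namely that $R(\varnothing)=0$. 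This holds because a metro counts as down only when at least half of its IXP presence is offline (\S\ref{sec:shield:routing}), so with no offline facility no metro is down and nothing is stranded.

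\emph{Case $T\le b$.} A facility is offline only if it is dark and $T>B_f$. For an unhardened facility $B_f=b\ge T$. For $f\in H$, $B_f=B_{\mathrm{hard}}>b\ge T$. So no facility is offline regardless of $X$, $W$, or $H$, hence $\mathrm{Off}=\varnothing$. The loss is then $0+\lambda R(\varnothing)=0$, both with and without hardening.

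\emph{Case $T>B_{\mathrm{hard}}$.} By Eq.~\eqref{eq:shield_eff}, $X_{\mathrm{eff}}=X$, so $W$ has no effect and the dark facilities are exactly those with $N(f)\subseteq X$, as without hardening. For every facility, hardened or not, $B_f\le B_{\mathrm{hard}}<T$ because $b<B_{\mathrm{hard}}$. Every dark facility is therefore offline in both configurations, so $\mathrm{Off}(X,W,H)=\{f: N(f)\subseteq X\}=\mathrm{Off}(X,\varnothing,\varnothing)$, and the two losses are equal term by term.

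\emph{Consequence.} For $T\le b$ and for $T>B_{\mathrm{hard}}$, the hardened loss equals the unhardened one. Hardening can therefore lower the loss only when $b<T\le B_{\mathrm{hard}}$. For the attacker objective of Eq.~\eqref{eq:shield_objective}, the same equality holds for every $X$ with $|X|\le B_a$, so it survives the inner maximum. For T2 and T3 it survives averaging over the fixed scenarios.

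The only delicate point is the stranded term. The proposition as phrased implicitly requires that $R$ be a function of the offline set alone, with $R(\varnothing)=0$. I would state these two properties explicitly from the definition in \S\ref{sec:shield:routing} before running the case analysis. The rest is routine unfolding of the definitions.
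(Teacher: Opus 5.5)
Your proposal is correct and follows the paper's proof essentially step by step: for $T\le b$ the offline set is empty (so $R(\varnothing)=0$ and the loss is zero), and for $T>B_{\mathrm{hard}}$ you have $X_{\mathrm{eff}}=X$ and $T>B_f$ for every facility, so the offline set is $\{f : N(f)\subseteq X\}$ whatever $W$ and $H$ are. Two points the paper leaves implicit are small but welcome additions: that $R$ depends only on the offline set, and that the equality carries through the inner maximum of Eq.~\eqref{eq:shield_objective} and through the scenario average.
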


\begin{proof}
A facility is offline only when $T > B_f$.
For $T \le b$, no facility satisfies this condition, because $B_f \ge b$ for every facility.
The set of offline facilities is then empty, no metropolitan area is down, $R(\varnothing) = 0$, and the loss is zero.
For $T > B_{\mathrm{hard}}$, Eq.~\eqref{eq:shield_eff} gives $X_{\mathrm{eff}} = X$, and $T > B_f$ holds for every facility, hardened or not.
The set of offline facilities is then the set of facilities with $N(f) \subseteq X$, which does not depend on $W$ or $H$.
\end{proof}

Against fixed scenarios, as in T2 and T3, choosing the substations to back up is a coverage problem, for which a greedy defender has a guarantee.

\begin{proposition}[Scenario hardening is a coverage problem]
\label{prop:coverage}
Fix scenarios $X_1,\dots,X_n$, $\lambda=0$, and $b < T \le B_{\mathrm{hard}}$.
The protection of substation backup, i.e., the mean loss without hardening minus the mean loss with $W$, is a monotone submodular function of $W$.
The greedy algorithm reaches at least a share $1-1/e$ of the optimal protection for every budget~\cite{nemhauser1978analysis}.
The protection of facility backup is a modular function of $H$, and the greedy algorithm is optimal.
\end{proposition}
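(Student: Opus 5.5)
With $\lambda=0$ and $b<T\le B_{\mathrm{hard}}$, the plan is to write the protection of $W$ as a weighted coverage function. Then monotonicity and submodularity follow, and the greedy bound is the result of Nemhauser et al.~\cite{nemhauser1978analysis}.

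\emph{Step 1: which facilities are offline.} With $H=\varnothing$, every facility has $B_f=b<T$. Hence a facility is offline exactly when it is dark. Because $T\le B_{\mathrm{hard}}$, Eq.~\eqref{eq:shield_eff} gives $X_{\mathrm{eff}}=X_i\setminus W$. So $f$ is offline in scenario $i$ if and only if $N(f)\subseteq X_i$ and $N(f)\cap W=\varnothing$.

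\emph{Step 2: the protection as a coverage function.} The protection in scenario $i$ is the weight of the facilities that are dark without hardening but not with it. Let $D_i=\{f: N(f)\subseteq X_i\}$. The protection is
\[
P_i(W)=\sum_{f\in D_i} w_f\,\mathbf{1}[N(f)\cap W\neq\varnothing].
\]
Each substation $s$ therefore covers the pairs $(i,f)$ with $f\in D_i$ and $s\in N(f)$, and the pair $(i,f)$ carries weight $w_f/n$. The mean protection $P(W)=\frac1n\sum_i P_i(W)$ is then the total weight of the pairs covered by $W$, which is a weighted coverage function with nonnegative weights.

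\emph{Step 3: monotonicity and submodularity.} Monotonicity is immediate from the indicator form. For submodularity, take $W\subseteq W'$ and $s\notin W'$. The marginal gain $P(W\cup\{s\})-P(W)$ is the weight of the pairs covered by $s$ and not by $W$. This set contains the pairs covered by $s$ and not by $W'$, so the gain at $W$ is at least the gain at $W'$. With $P(\varnothing)=0$, the cardinality-constrained greedy algorithm attains at least $(1-1/e)$ of the optimum for every budget $B_d$.

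\emph{Step 4: facility backup.} With $W=\varnothing$, hardening $f\in H$ sets $B_f=B_{\mathrm{hard}}\ge T$, so $f$ is not offline. Every other facility is offline exactly when it lies in $D_i$. The protection is therefore
\[
Q(H)=\sum_{f\in H} w_f\,\frac{|\{i: f\in D_i\}|}{n},
\]
which is modular with nonnegative coefficients. Greedy picks the $B_d$ facilities with the largest coefficients, which maximizes a sum of nonnegative terms under a cardinality constraint, so greedy is optimal.

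The main care point is Step 1. It must check that the window $b<T\le B_{\mathrm{hard}}$ makes hardened substations fully effective and unhardened facilities always fail when dark. Without this, the indicator form, and with it the coverage structure, would not hold. Setting $\lambda=0$ matters here too: the stranded term $R$ depends on whether metros are down, which is a threshold on the offline set, and such a term need not be submodular.
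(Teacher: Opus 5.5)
Your proposal is correct and follows essentially the same route as the paper's proof. Like the paper, you characterize offline facilities in the window $b<T\le B_{\mathrm{hard}}$ as those with $N(f)\subseteq X_i$ and $N(f)\cap W=\varnothing$, write the protection as a nonnegative weighted coverage function over the pairs $(i,f)$, invoke Nemhauser et al., and show that facility protection is modular with coefficient $w_f\,|\{i: N(f)\subseteq X_i\}|/n$. Your explicit diminishing-returns check and your remarks on $P(\varnothing)=0$ and on why $\lambda=0$ is needed just spell out what the paper leaves implicit.
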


\begin{proof}
For $b < T \le B_{\mathrm{hard}}$ and substation backup, facility $f$ is offline in scenario $X_i$ when $N(f) \subseteq X_i \setminus W$, which holds when $N(f) \subseteq X_i$ and $N(f) \cap W = \varnothing$.
The protection of $W$ is therefore
\[
P(W) = \frac{1}{n} \sum_{i=1}^{n} \; \sum_{f \,:\, N(f) \subseteq X_i} w_f \, \mathbf{1}[N(f) \cap W \neq \varnothing].
\]
Each indicator is the coverage of the element $(i,f)$ by the sets $\{(i,f) \mid s \in N(f)\}$ of the substations $s \in W$.
$P$ is a non-negative weighted sum of coverage functions, hence monotone and submodular, and the greedy algorithm reaches a share $1-1/e$ of the optimum under a cardinality constraint~\cite{nemhauser1978analysis}.
For facility backup, a hardened facility is never offline and the other facilities are unaffected, which gives $P(H) = \sum_{f \in H} w_f \, |\{i \mid N(f) \subseteq X_i\}| / n$.
This function is modular, and the $B_d$ facilities with the largest terms are optimal.
\end{proof}

Against the attacker of T1, this guarantee does not hold, because the T1 objective is not submodular in $W$.
We therefore measure the gap of the greedy defender instead (Appendix~\ref{sec:shield:solver}).
For example, take two facilities $f_1$ and $f_2$ of unit weight with disjoint candidate sets of size $k$, $B_a = k$, $\lambda = 0$, and $b < T \le B_{\mathrm{hard}}$.
Because the attacker can darken one unhardened facility, the worst-case loss is 1 for $W = \varnothing$, for $W = N(f_1)$, and for $W = N(f_2)$, and 0 for $W = N(f_1) \cup N(f_2)$.
The protection $L(\varnothing) - L(W)$ is 0, 0, 0, and 1 for these four sets.
Adding $N(f_2)$ to $N(f_1)$ gains 1 and adding $N(f_2)$ to $\varnothing$ gains 0, which violates the diminishing-returns condition of submodularity.

The attacker problem of T1 is NP-hard, but its solution limits where the defender searches.

\begin{proposition}[Attacker and pruning]
\label{prop:attacker}
For $k \ge 2$ the attacker problem of Eq.~\eqref{eq:shield_objective} is NP-hard, and so is the defender problem.
Let $\lambda=0$, let $L(W)$ be the worst-case loss against $W$, and let $X^\ast$ be an optimal attack against $W$.
For every $W' \supseteq W$ with $(W' \setminus W) \cap X^\ast = \varnothing$, $L(W') = L(W)$.
\end{proposition}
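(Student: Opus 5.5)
For NP-hardness I will reduce from the densest $k$-subgraph problem, as announced in \S\ref{sec:shield:formulation}. Take a graph $G=(V,E)$ and an integer $p$, where the question is whether some $p$ vertices induce at least $m$ edges. I build an instance with $k=2$, $\lambda=0$, $W=H=\varnothing$, and $T$ larger than every facility backup. There is one substation per vertex. For each edge $\{u,v\}$ there is one facility of unit weight with $N(f)=\{u,v\}$, and I set $B_a=p$.

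A facility is offline exactly when both of its candidates lie in $X$. The loss of an attack $X$ with $|X|\le p$ is therefore the number of edges that $X$ induces, and the attacker's optimum equals the densest-$p$-subgraph value. For $k>2$, I pad each candidate set with $k-2$ private dummy substations per facility. The attacker would then have to spend budget on the dummies, which breaks the correspondence. I will avoid this by giving each facility candidates $\{u,v\}$ only. The statement needs only $|N(f)|\ge 2$ to be allowed, and the $k$ of the statement is read as the maximum candidate-set size, so $k=2$ instances are instances for every $k\ge2$. The defender problem is hard as well, because with $B_d=0$ it coincides with the attacker problem. Any algorithm for the min--max of Eq.~\eqref{eq:shield_objective} would therefore solve the attacker problem.

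For the pruning claim, I first note that monotonicity gives $L(W')\le L(W)$. Backing up more substations can only shrink $X_{\mathrm{eff}}=X\setminus W'$ for every $X$, so by Eq.~\eqref{eq:shield_eff} the set of offline facilities shrinks and, with $\lambda=0$, so does the loss. This holds for every attack, so the maximum over attacks cannot increase.

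For the reverse inequality, I evaluate the fixed attack $X^\ast$ against $W'$. Because $(W'\setminus W)\cap X^\ast=\varnothing$, we have $X^\ast\setminus W'=X^\ast\setminus W$. The effective de-energized set is therefore identical, so the offline set and the loss are identical too, and $L(W')\ge \mathrm{Loss}(X^\ast,W',\varnothing)=\mathrm{Loss}(X^\ast,W,\varnothing)=L(W)$.

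The duration case $T>B_{\mathrm{hard}}$ makes $W$ irrelevant by Proposition~\ref{prop:window}, so equality holds trivially there. The only delicate point in the whole argument is the handling of $k$ in the reduction. I would state explicitly that candidate sets of size two suffice, or alternatively add the padding substations as shared dummies that lie in every attack at a budget offset. If that reading is contested, the fallback is a hypergraph version of densest subgraph, which is also NP-hard.
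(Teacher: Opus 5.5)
Your proof is correct and follows the paper's. It uses the same reduction from densest subgraph with one unit-weight facility per edge, the same observation that the defender problem with $B_d=0$ is the attacker problem, and the same monotonicity-plus-fixed-attack argument for the pruning claim. For $k>2$ the paper uses exactly the fallback you mention: it adds the same $k-2$ substations to every candidate set and raises $B_a$ by $k-2$, so every attack that darkens a facility must contain them. Make that your main argument instead of reading $k$ as an upper bound, since it gives candidate sets of exactly size $k$ and needs no reinterpretation.
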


\begin{proof}
For hardness, we reduce the densest-subgraph problem to the attacker problem.
Given a graph $G$ and an integer $m$, this problem asks for the $m$ vertices that span the most edges.
The densest-subgraph problem is NP-hard, because $G$ has a clique of $m$ vertices exactly when its densest subgraph on $m$ vertices spans $m(m-1)/2$ edges~\cite{feige2001dense}.
From $G$ we build an instance with $k=2$, one substation per vertex, and one facility of unit weight per edge, whose two candidates are the endpoints of the edge.
We set $B_a = m$, $W = \varnothing$, $\lambda = 0$, and $b < T \le B_{\mathrm{hard}}$.
A facility then goes offline exactly when the attack contains both endpoints of its edge, and the loss of an attack is therefore the number of edges that its substations span.
An optimal attack thus selects a densest subgraph on $m$ vertices, and the instance is built in time linear in the size of $G$.
For $k > 2$, we add the same $k-2$ new substations to every candidate set and raise $B_a$ by $k-2$.
Because every attack that takes a facility offline contains the new substations, the optimal attacks are those of the case $k=2$ plus these substations.
Because the defender problem of Eq.~\eqref{eq:shield_objective} with $B_d = 0$ is the attacker problem, it is NP-hard as well.
For the second statement, hardening is monotone, because $X \setminus W' \subseteq X \setminus W$ for $W' \supseteq W$, and therefore $L(W') \le L(W)$.
If $(W' \setminus W) \cap X^\ast = \varnothing$, then $X^\ast \setminus W' = X^\ast \setminus W$.
The attack $X^\ast$ therefore causes the same loss against $W'$ as against $W$, which gives $L(W') \ge L(W)$.
\end{proof}
By the second statement, the greedy defender of \name needs to search only the substations of the current optimal attack (Appendix~\ref{sec:shield:solver}).

\subsection{Solver}
\label{sec:shield:solver}
\smallskip\noindent\textbf{Exact attacker.}
Because a greedy attacker underestimates the loss (below), \name solves the attacker problem of T1 exactly, as a mixed-integer linear program.
A binary variable $x_s$ marks an attacked substation and a variable $y_f$ a dark facility, under the constraints $y_f \le x_s$ for every $s \in N(f)$ and $\sum_s x_s \le B_a$.
The program maximizes $\sum_f w_f y_f$ over the facilities that can go offline, with $T > B_f$ and no hardened substation in $N(f)$.
HiGHS~\cite{huangfu2018highs} solves each to proven optimality.

\smallskip\noindent\textbf{Greedy defender.}
Because in T1 only a unit that the current optimal attack uses can lower the worst-case loss (Proposition~\ref{prop:attacker}), the candidates of each greedy step are the substations of the facilities that this attack takes offline.
The search thus covers all 984 substations without a shortlist.
If no candidate lowers the loss, \name breaks the tie by hardening the candidate that serves the most IXP presence at these facilities.

\begin{table}[!t]
\centering
\small
\caption{Gap of the greedy defender of \name against the exact defender, for substation backup unless stated, a 16-hour event, and $\lambda = 0$. Gaps are in IXP presences, with the largest absolute gap per row.}
\label{tab:shield_gap}
\begin{tabular}{llll}
\toprule
Threat & Exact for & Greedy optimal & Largest gap \\
\midrule
T1, $B_a = 5$ & $B_d \le 5$ & $B_d = 1$ & 11 of 41 ($B_d$=2) \\
T1, $B_a = 10$ & $B_d \le 5$ & all & 0 \\
T1, $B_a = 20$ & $B_d \le 4$ & all & 0 \\
T1, facility backup & $B_d \le 3$ & all & 0 \\
T2 & $B_d \le 30$ & $B_d \le 6$ & 5.5 of 236 ($B_d$=27) \\
T3 & $B_d \le 30$ & all & 0 \\
\bottomrule
\end{tabular}
\end{table}

\smallskip\noindent\textbf{Greedy attackers underestimate the loss.}
A common greedy attacker adds at each step the substation with the largest marginal loss.
With three candidates per facility, the first substation that this attacker adds darkens no facility, and the attacker therefore starts without a signal.
Without hardening, the greedy attacker takes 4, 111, and 227 IXP presences offline for $B_a = 5$, 10, and 20, against 77, 167, and 271 for the exact attacker.
This attacker thus underestimates the loss by 94.8\%, 33.5\%, and 16.2\%.
At $B_a = B_d = 10$, \name leaves 47 IXP presences offline under the exact attacker, against 136 for a defender that plans against the greedy attacker.

\smallskip\noindent\textbf{Exact defender.}
To measure the gap of the greedy defender, we compare it with an exact defender.
In T1, the exact defender uses implicit enumeration, because extending a hardening set $W$ lowers its worst-case loss only if the extension contains a substation of the optimal attack against $W$ (Proposition~\ref{prop:attacker}).
The search therefore branches only on those substations and solves the exact attacker at every node, which yields the optimum over all subsets of the 984 substations and not over a shortlist.
In T2 and T3, we solve the coverage problem as a mixed-integer linear program for every budget from 1 to 30.
The greedy defender is optimal at every budget in T3, and its largest gaps are 11 of 41 IXP presences in T1 with $B_a = 5$ and 5.5 of 236 (2.3\%) in T2 (Table~\ref{tab:shield_gap}).
We did not measure the gap for $\lambda > 0$ in T2 and T3, where the objective has a threshold and the coverage guarantee of Proposition~\ref{prop:coverage} does not apply.

\subsection{Additional Results}
\label{app:shield_more}
\smallskip\noindent\textbf{Rankings and budgets.}
We compare \name with the best of the five rankings of \S\ref{sec:shield:results} at budgets of 5, 10, and 20 (Table~\ref{tab:shield_headline_full}).
The threats are the attackers on 5, 10, and 20 substations and the two cascades.
The most-shared ranking here breaks ties by OSM identifier rather than by IXP presence served, which changes one of its first ten substations against the order of \S\ref{sec:results:bipartite}.
Against the attacker on ten substations, backing up the ten substations of \name leaves no IXP down and 21 degraded, against 5 down and 46 degraded without hardening.

\smallskip\noindent\textbf{Stranded traffic.}
At the 5\% seed of the largest-island rule (Figure~\ref{fig:shield_failover}), Amsterdam receives 40.3\% of the monitored addresses of Warsaw and 24.0\% of those of Paris.
Paris receives 17.2\% of Madrid's and 6.9\% of Amsterdam's.
Of the 45.6 stranded IXP presences, 21.0 come from Warsaw.

\begin{figure}[!t]
\centering
\begin{minipage}[t]{0.57\linewidth}
\vspace{0pt}
\centering
\small
\setlength{\tabcolsep}{3pt}
\captionof{table}{IXP presence offline (count, of 1{,}392) for a 16-hour event, substation backup, and $\lambda = 0$. T2 and T3 report the mean over four scenarios. The best ranking is the best of the five rankings at that budget, chosen after the fact.}
\label{tab:shield_headline_full}
\begin{tabular}{lrrrrrrr}
\toprule
 & & \multicolumn{3}{c}{\name} & \multicolumn{3}{c}{Best ranking} \\
\cmidrule(lr){3-5}\cmidrule(lr){6-8}
Threat & $B_d = 0$ & 5 & 10 & 20 & 5 & 10 & 20 \\
\midrule
T1, $B_a = 5$ & 77 & 29 & 20 & 16 & 41 & 41 & 28 \\
T1, $B_a = 10$ & 167 & 69 & 47 & 34 & 90 & 80 & 58 \\
T1, $B_a = 20$ & 271 & 119 & 94 & 68 & 150 & 140 & 107 \\
T2 & 554.3 & 421.0 & 361.8 & 285.8 & 468.5 & 460.5 & 423.3 \\
T3 & 173.0 & 133.5 & 109.5 & 80.5 & 162.3 & 162.3 & 154.0 \\
\bottomrule
\end{tabular}
\end{minipage}\hfill
\begin{minipage}[t]{0.40\linewidth}
\vspace{0pt}
\centering
\includegraphics[width=\linewidth]{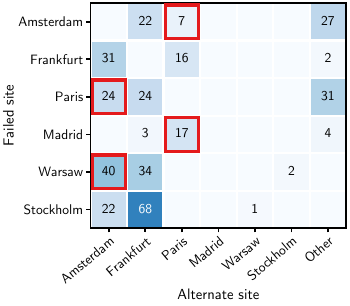}
\captionof{figure}{Share (\%) of the monitored addresses of each failed PEERING site (\S\ref{section:routing_results}) that reappears at each of the other five European sites or at a site outside the scope. Red frames mark the pairs of sites that go down together at the 5\% seed of the largest-island rule, whose share is therefore stranded.}
\label{fig:shield_failover}
\end{minipage}
\end{figure}

\smallskip\noindent\textbf{Hedged set.}
We compute the hedged set of \S\ref{sec:shield:robustness} greedily under the $k = 3$ feeder model.

\smallskip\noindent\textbf{Heterogeneous backup.}
To test whether \name needs to know the backup of each facility, we draw the backup assignment of \S\ref{sec:shield:duration} 20 times.
At $B_d = 10$ and for a 16-hour event, the set chosen with knowledge of each draw leaves a mean of 41.2, 260.7, and 77.5 IXP presences offline under T1, T2, and T3.
The set chosen for 12~hours at every facility leaves 45.2, 270.7, and 82.2, only 4--10 more.
Under this assignment, IXP presence goes offline after 15~minutes.
For a 2-hour event, backing up the ten substations of \name lowers the loss from 7.5\% to 2.9\% under T1, from 20.2\% to 12.9\% under T2, and from 6.1\% to 3.8\% under T3.

\begin{figure}[!tb]
\centering
\includegraphics[width=0.85\linewidth]{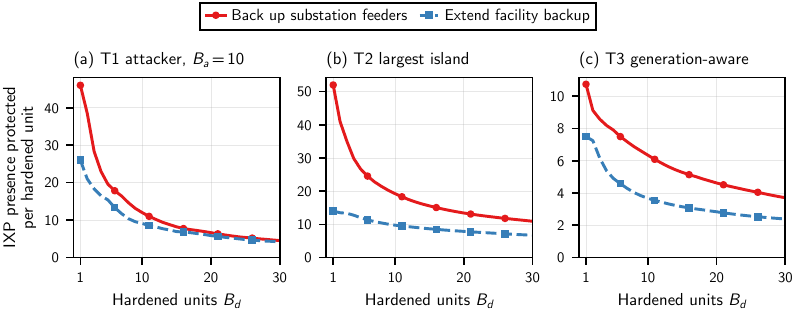}
\caption{IXP presence protected per hardened unit when \name backs up substation feeders or extends facility backup, for a 16-hour event and $\lambda = 0$. Substation backup protects more per unit, because a backed-up substation covers every facility that has it as a candidate. The advantage is largest under T2 (b) and shrinks with the budget under T1 (a).}
\label{fig:shield_levers}
\end{figure}

\begin{table}[!tb]
\centering
\footnotesize
\caption{IXP presences offline for a cost budget of ten when a substation costs $c$ times a facility, for a 16-hour event and $\lambda = 0$, with T2 and T3 as the mean over four seeds. ``Both'' lets the defender pick substations or facilities by protection per cost, with the substations plus facilities it picks in parentheses. Without hardening the loss is 167, 554.3, and 173.0.}
\label{tab:shield_cost}
\setlength{\tabcolsep}{4pt}
\begin{tabular}{@{}llrrrr@{}}
\toprule
Threat & Hardening & $c = 1$ & $c = 2$ & $c = 3$ & $c = 10$ \\
\midrule
T1, $B_a = 10$ & Substations only & 47 & 69 & 82 & 121 \\
 & Facilities only & 76 & 76 & 76 & 76 \\
 & Both, by cost & 47 (10+0) & 67 (1+8) & 73 (0+10) & 73 (0+10) \\
\addlinespace
T2 & Substations only & 361.8 & 421.0 & 449.2 & 502.2 \\
 & Facilities only & 456.8 & 456.8 & 456.8 & 456.8 \\
 & Both, by cost & 361.8 (10+0) & 412.8 (3+4) & 431.8 (1+7) & 456.8 (0+10) \\
\addlinespace
T3 & Substations only & 109.5 & 133.5 & 147.2 & 162.2 \\
 & Facilities only & 136.2 & 136.2 & 136.2 & 136.2 \\
 & Both, by cost & 109.5 (10+0) & 127.0 (3+4) & 132.2 (1+7) & 136.2 (0+10) \\
\bottomrule
\end{tabular}
\end{table}

\begin{table}[!tb]
\centering
\small
\caption{Stranded IXP presence without hardening, with five, and with ten backed-up substations of \name chosen without stranded traffic in the objective, for a 16-hour event and a metro that counts as down at the given share of its IXP presence offline. The T1 column is the attacker on twenty substations who targets stranded traffic. The last column gives how many of the first five substations change when \name counts stranded traffic in its objective, under T1 with ten and twenty attacked substations, T2, and T3.}
\label{tab:shield_threshold}
\setlength{\tabcolsep}{4pt}
\begin{tabular}{@{}lccccc@{}}
\toprule
Down at & T2, 5\% seed & T2, 10\% seed & T3, 10\% seed & T1, $B_a = 20$ & First five changed \\
\midrule
Any offline & 45.6 / 45.6 / 45.6 & 182.1 / 182.1 / 182.1 & 40.7 / 40.7 / 40.7 & 83.0 / 121.8 / 11.5 & 2, 1, 1, 1 \\
25\% & 45.6 / 40.6 / 0 & 182.1 / 176.4 / 0 & 40.7 / 5.0 / 0 & 224.3 / 89.3 / 0 & 0, 2, 2, 0 \\
50\% & 45.6 / 0 / 0 & 45.6 / 0 / 0 & 5.0 / 0 / 0 & 121.8 / 0 / 0 & 0, 1, 0, 0 \\
75\% & 40.6 / 0 / 0 & 40.6 / 0 / 0 & 0 / 0 / 0 & 83.0 / 0 / 0 & 0, 0, 0, 0 \\
100\% & 40.6 / 0 / 0 & 40.6 / 0 / 0 & 0 / 0 / 0 & 0 / 0 / 0 & 0, 0, 0, 0 \\
\bottomrule
\end{tabular}
\end{table}

\end{document}